\documentclass[11pt]{amsart}

\usepackage[T1]{fontenc}
\usepackage{microtype}
\usepackage{amsmath,amssymb,amsthm}
\usepackage{graphicx}
\graphicspath{{figures/}}
\usepackage[margin=1in]{geometry}
\usepackage[numbers,sort&compress]{natbib}
\usepackage[colorlinks=true,linkcolor=blue,citecolor=blue,urlcolor=blue]{hyperref}

\numberwithin{equation}{section}
\newtheorem{theorem}{Theorem}[section]
\newtheorem{proposition}[theorem]{Proposition}
\newtheorem{lemma}[theorem]{Lemma}
\newtheorem{corollary}[theorem]{Corollary}
\theoremstyle{remark}
\newtheorem{remark}[theorem]{Remark}

\newcommand{\C}{\mathbb C}
\newcommand{\E}{\mathbb E}
\newcommand{\Prob}{\mathbb P}
\newcommand{\Poly}{\operatorname{Poly}}
\newcommand{\DPP}{\operatorname{DPP}}
\newcommand{\capc}{\operatorname{cap}}
\newcommand{\dist}{\operatorname{dist}}
\newcommand{\Tr}{\operatorname{Tr}}

\title[Quenched Growth of Normal Matrices]{Quenched Growth of Normal Matrices and Harmonic Measure}

\hypersetup{pdftitle={Quenched Growth of Normal Matrices and Harmonic Measure}}

\author{Oleg Alekseev}\thanks{ORCID: \href{https://orcid.org/0000-0002-0666-601X}{0000-0002-0666-601X}.}

\address{Chebyshev Laboratory, Department of Mathematics and Computer Science, Saint Petersburg State University, Saint Petersburg, Russia}

\address{Department of Mathematics, HSE University (National Research University Higher School of Economics), Moscow, Russia}

\subjclass[2020]{Primary 60B20; Secondary 60G55, 31A15, 42C05}
\keywords{normal random matrices, quenched growth, harmonic measure,
	determinantal point processes, planar orthogonal polynomials, Coulomb gas,
	Laplacian growth}

\date{}

\begin{document}
\begin{abstract}
	The eigenvalues of a large normal random matrix form a two-dimensional
	Coulomb gas.  At fixed weight, the spectral droplet grows with the matrix
	rank, and its semiclassical density increment is harmonic measure.
	We freeze an $N$-eigenvalue configuration and sample $M$ further
	eigenvalues using the projection onto polynomials of degree less than $N+M$
	that vanish at the frozen points.
	Their conditional law is an exact polynomial ensemble in the potential
	of the frozen charges.
	For a regular analytic droplet, we prove that the spatial distribution of the new eigenvalues converges to harmonic
	measure on the old boundary when $M/\log N\to\infty$ and $M/N\to0$.
	This convergence holds conditionally on the frozen spectrum, outside a set of frozen configurations whose probability tends to zero.
	The equilibrium process formed from
	orthogonal polynomials of degrees $N,\ldots,N+M-1$ has the same limit.
	An exact finite-rank identity relates polynomial moments of the
	conditional mean to fluctuations of the frozen spectrum, with a remainder
	from polynomial modes of degree $N+M$ and above.
\end{abstract}

\maketitle

\section{Introduction}

The eigenvalues of a normal random matrix form a two-dimensional Coulomb
gas after the matrix angular variables are integrated out~\cite{ChauZaboronsky1998,TeodorescuNPB,ElbauFelder}.  At large matrix rank,
they fill a compact region of the plane, called the spectral droplet.
In the large matrix limit, increasing the rank at fixed weight drives
Laplacian growth of the droplet boundary \cite{ABWZ02,HedenmalmMakarov}.
This evolution can be described through harmonic moments and conformal maps
\cite{MineevWeinsteinWiegmannZabrodin,WiegmannZabrodinConformal}.

We compare two ways of adding eigenvalues.  In \emph{equilibrium growth},
we compare the rank-$N$ and rank-$(N+M)$ ensembles at the same weight, such that
all eigenvalues are sampled anew.  The increase in mean density comes
from the orthogonal polynomials of degrees $N,\ldots,N+M-1$, which we
call the \emph{orbital band}.  For a regular analytic droplet, the band's
normalized mean density tends to harmonic measure when $M/N\to0$.
If also $M\to\infty$, a batch sampled from this band has the same limit
in probability.

In \emph{quenched growth}, we hold the observed spectrum
$Y=\{y_1,\ldots,y_N\}$ fixed and sample only the $M$ new eigenvalues.
Their conditional law is defined by the projection onto polynomials
of degree less than $N+M$ that vanish at the frozen points.
The old spectrum thus acts as frozen disorder.  The joint density of
the new points contains the factor
\[
	\prod_{a=1}^M\prod_{i=1}^N|z_a-y_i|^2.
\]
This factor describes their interaction with the old charges.
The new points also repel one another through the Vandermonde determinant
in their density.

Our main result concerns a single quenched batch.  As $N\to\infty$,
let its size $M=M_N$ satisfy
\[
	\log N\ll M\ll N.
\]
For a regular analytic droplet, the spatial distribution of the new points
converges weakly to exterior harmonic measure on the old boundary.
The limit holds conditionally on the old spectrum, with the probability
of exceptional frozen configurations tending to zero.
Thus, the two growth rules have the same limiting boundary distribution,
although their laws differ at finite $N$.

These differences are visible in the numerical three-fold example of
Section~\ref{sec:examples}.  After projection onto the boundary,
quenched batches show weaker angular repulsion and larger fluctuations
in arc counts and low angular Fourier modes than the equilibrium band.
We separate the structure factor into the contribution of the conditional
mean and the fluctuations of batches sampled at fixed old spectrum.

Earlier work established the harmonic-measure description of equilibrium
growth.  Agam, Bettelheim, Wiegmann, and Zabrodin obtained this result
for a newly occupied orbital in ``one-electron'' Laplacian growth
\cite{ABWZ02}, and Hedenmalm and Makarov proved a weak derivative formula
for the equilibrium measure of a regular growing droplet
\cite{HedenmalmMakarov}.  Our theorem gives a conditional counterpart
to this equilibrium law when the observed spectrum is kept fixed.

The construction also connects growth to conditional point processes.
For Ginibre-type processes, conditional laws have explicit
orthogonal-polynomial weights
\cite{BufetovQiuConditional,BufetovQiuShamov}.
In the Ginibre process, the exterior configuration determines how many
points lie inside a bounded region, while their positions remain random
(number rigidity and tolerance) \cite{GhoshPeres}.
Here polynomial projections give an exact finite-$N$ law for adding
a prescribed number of points to an observed spectrum.

For equilibrium ensembles, macroscopic connected correlations are
related to the exterior Dirichlet problem
\cite{WiegmannZabrodinCorrelations}, and smooth bulk linear statistics
have Gaussian limits under appropriate hypotheses
\cite{AmeurHedenmalmMakarovFluctuations}.
Boundary contributions are known for the Ginibre disk \cite{RiderVirag},
and a Gaussian free field with free boundary conditions describes
fluctuations for regular droplets \cite{AmeurHedenmalmMakarovWard}.
Proposition~\ref{prop:screening} relates conditional mean polynomial
moments of the new batch to fluctuations of the old spectrum and a
finite-rank remainder.

In diffusion-limited aggregation, harmonic measure gives the attachment
distribution of a Brownian particle arriving at a frozen aggregate
\cite{WittenSanderPRL}.
Stochastic Laplacian growth has also been studied through independent
Brownian batches \cite{AlekseevMineevWeinstein} and iterated conformal
maps \cite{HastingsLevitov,NorrisTurner}.

The paper is organized as follows. Section~\ref{sec:rules} defines the two growth rules.
Section~\ref{sec:theorem} states the quenched harmonic-measure theorem,
and Section~\ref{sec:proof} gives its proof.
Section~\ref{sec:examples} compares the two rules numerically on the
three-fold droplet.  Section~\ref{sec:discussion} discusses the results
and open questions.  Appendix~\ref{app:equilibrium-cue} proves the
fixed-batch CUE limit for the orbital band.

\section{The normal-matrix ensemble and two growth rules}\label{sec:rules}

We work in the complex plane with normalized area measure
$dA(z)=\pi^{-1}dx\,dy$ with $z=x+iy$.  Let $Q:\C\to\mathbb R$ be a confining
potential and set
\[
	d\lambda_N(z)=e^{-2NQ(z)}dA(z).
\]
We write $\Poly_{<n}$ for the space of complex polynomials of degree less than
$n$, for $n\ge1$ such that this space lies in $L^2(\lambda_N)$.
The weight $e^{-2NQ}dA$ stays fixed as $n$ changes. Below, we write
\[
	V(z_1,\ldots,z_n)=\prod_{1\le a<b\le n}(z_b-z_a)
\]
for the Vandermonde determinant. Then, the rank-$n$ normal-matrix eigenvalue law
at the fixed weight reads
\[
	\Prob_n(dz_1\cdots dz_n)
	=\frac{|V(z_1,\ldots,z_n)|^2}{\mathcal Z_{n,N}}
	\prod_{a=1}^n d\lambda_N(z_a).
\]
Here $\mathcal Z_{n,N}$ is the normalization constant, also called the
partition function.

This density is also the Gibbs law of a planar Coulomb gas at inverse temperature
$2$, with logarithmic pair repulsion and confining potential $NQ$.

By $p_{j,N}$ we denote the monic polynomial of degree $j$ orthogonal in
$L^2(\lambda_N)$, and define
\begin{equation}\label{eq:orthogonal-polynomial}
	h_{j,N}=\|p_{j,N}\|_{L^2(\lambda_N)}^2,
	\qquad \phi_{j,N}=p_{j,N}/\sqrt{h_{j,N}}.
\end{equation}
Orthogonality gives the partition function as a product of these squared norms:
\begin{equation}\label{eq:tau-normalizer}
	\mathcal Z_{n,N}
	=\int_{\C^n}|V(z_1,\ldots,z_n)|^2
	\prod_{a=1}^n d\lambda_N(z_a)
	=n!\prod_{j=0}^{n-1}h_{j,N}.
\end{equation}
This partition function also appears in the tau-function formulation of
normal random matrices \cite{KostovKricheverMineevWeinsteinWiegmannZabrodin}.

We denote by $\Pi_n$ the orthogonal projection from $L^2(\lambda_N)$ onto
$\Poly_{<n}$, and recall its integral kernel relative to $d\lambda_N$:
\[
	K_n(z,w)=\sum_{j=0}^{n-1}\phi_{j,N}(z)\overline{\phi_{j,N}(w)},
	\qquad
	(\Pi_n f)(z)=\int_{\C}K_n(z,w)f(w)\,d\lambda_N(w).
\]

The orthogonal polynomial identity for the Vandermonde determinant gives the equivalent
determinantal form of the eigenvalue law:
\begin{equation}\label{eq:dpp}
	\Prob_n(dz_1\cdots dz_n)
	=\frac{1}{n!}\det[K_n(z_a,z_b)]_{a,b=1}^n
	\prod_{a=1}^n d\lambda_N(z_a).
\end{equation}
We will refer to this projection process as $\DPP(\Pi_n)$ (see \cite{HoughKrishnapurPeresVirag}
for background on determinantal point
processes).
Its one-point intensity relative to $dA$ reads
\[
	R_{n,N}(z)=K_n(z,z)e^{-2NQ(z)}.
\]

Below, we put $H=\Poly_{<N}$ and $F=\Poly_{<N+M}$, and write
$\Pi_H=\Pi_N$, $\Pi_F=\Pi_{N+M}$ for the corresponding projections and
$K_H=K_N$, $K_F=K_{N+M}$ for their kernels.

\subsection{Equilibrium growth: newly occupied orbitals}
Equilibrium growth compares the two canonical ensembles $\DPP(\Pi_H)$ and
$\DPP(\Pi_F)$.  Since $H\subset F$, the operator
\begin{equation}\label{eq:band-projection}
	\Pi^{\mathrm{band}}_{N,M}
	=\Pi_F-\Pi_H
	=\sum_{j=N}^{N+M-1}\phi_{j,N}\otimes\phi_{j,N}
\end{equation}
is the rank-$M$ orthogonal projection onto the newly occupied orbital band,
and $(\phi\otimes\phi)g=\langle g,\phi\rangle_{L^2(\lambda_N)}\phi$.
We describe equilibrium growth by the point process on this band and define
its empirical measure  by
\begin{equation}\label{eq:equilibrium-band-process}
	Z^{\mathrm{eq}}_{N,M}\sim\DPP(\Pi^{\mathrm{band}}_{N,M}),
	\qquad
	\widehat\mu^{\mathrm{eq}}_{N,M}
	=\frac1M\sum_{z\in Z^{\mathrm{eq}}_{N,M}}\delta_z .
\end{equation}
Its mean is the increase in equilibrium density divided by the number
of added points, namely,
\begin{equation}\label{eq:equilibrium-band-mean}
	\mu^{\mathrm{eq}}_{N,M}
	=\E\bigl[\widehat\mu^{\mathrm{eq}}_{N,M}\bigr]
	=\frac{R_{N+M,N}-R_{N,N}}{M}\,dA.
\end{equation}
Hence, the one-point intensity of equilibrium growth reads
\begin{equation}\label{eq:band-density}
	R_{N+M,N}(z)-R_{N,N}(z)
	=e^{-2NQ(z)}\sum_{j=N}^{N+M-1}|\phi_{j,N}(z)|^2.
\end{equation}

The boundary-layer calculation of \cite[Secs.~3--4]{AlekseevTMF}
gives a circular unitary ensemble (CUE) law for the joint distribution
of the conformal angles of the $M$ points sampled from the equilibrium
orbital band.  This law describes their angular correlations, and each
individual angle is uniformly distributed.
The following proposition proves convergence to this joint angular law
as $N\to\infty$ with $M$ fixed, under regular analytic assumptions.
\begin{proposition}[CUE limit for equilibrium growth]
	\label{prop:radial-band-cue}
	Let $Q$ satisfy the regular analytic assumptions of
	Section~\ref{sec:theorem}. We fix $M$ and introduce  $\varphi$ as the normalized
	exterior conformal map of $D$.  In a fixed collar of $\partial D$ where
	$\varphi$ extends conformally, we set $\Theta(z)=\arg\varphi(z)$ and
	extend $\Theta$ measurably elsewhere.
	Let us label the points of $Z^{\mathrm{eq}}_{N,M}$ exchangeably and set
	$\theta_a=\Theta(z_a)$. Then, as $N\to\infty$, the joint law of the labelled
	angles converges in total variation
	to the CUE eigenangle density of Haar-distributed unitary matrices
	\cite{DysonJMP1},
	\begin{equation}\label{eq:radial-band-cue}
		\frac{1}{M!(2\pi)^M}
		\prod_{1\le a<b\le M}|e^{i\theta_b}-e^{i\theta_a}|^2
		\prod_{a=1}^M d\theta_a .
	\end{equation}
\end{proposition}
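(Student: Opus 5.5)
We will derive the joint law of the labelled band points from the determinantal structure and reduce the CUE limit to boundary-layer asymptotics of the orthogonal polynomials $\phi_{j,N}$ for $j=N,\ldots,N+M-1$.

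\textbf{Step 1: exact joint law.} Since $\Pi^{\mathrm{band}}_{N,M}$ has rank $M$, the process $Z^{\mathrm{eq}}_{N,M}$ has exactly $M$ points. The exchangeably labelled joint density with respect to $\prod_a dA(z_a)$ is
\[
\frac{1}{M!}\bigl|\det[\phi_{N+k-1,N}(z_a)]_{a,k=1}^M\bigr|^2\prod_{a=1}^M e^{-2NQ(z_a)}.
\]
The angles are $\theta_a=\Theta(z_a)$. Because total variation does not increase under pushforward, it suffices to show convergence in $L^1(\C^M)$ of this density to a limiting density whose pushforward under $\Theta^{\otimes M}$ is \eqref{eq:radial-band-cue}. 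In practice we will show the stronger statement that the density, written in suitable boundary-layer coordinates, converges in $L^1$.

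\textbf{Step 2: asymptotics of the band orthogonal polynomials.} Under the regular analytic assumptions, we will use the known strong asymptotics of planar orthogonal polynomials near the boundary (Hedenmalm--Wennman type). For $j=N+k$ with $k$ fixed, these say that in the collar
\[
\phi_{j,N}(z)e^{-NQ(z)}=\Bigl(\tfrac{N}{2\pi}\Bigr)^{1/4}\sqrt{|\varphi'(z)|}\,\varphi(z)^{k}\,e^{-N\,\xi(z)^2}\bigl(1+o(1)\bigr)\times(\text{unimodular factor}),
\]
where $\xi(z)\approx\sqrt2\,\dist(z,\partial D)$ is a normal coordinate, and the formula holds uniformly for $\dist(z,\partial D)\le N^{-1/2}\log N$. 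Away from this layer there are Gaussian-small bounds. We also need the fact that the $\varphi$-dependence is the same for all $j$ up to the power $\varphi^{k}$: the degree shift $j=N+k$ with fixed weight is exactly what produces $\varphi^k$. The unimodular factor depends only on $z$ and $N$, not on $k$, so it factors out of each row of the determinant.

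We write $z=\text{point on }\partial D\text{ with conformal angle }\theta$ plus normal offset $t/\sqrt N$, and set $\varphi(z)=(1+O(N^{-1/2}))e^{i\theta}$. Then $\varphi^k=e^{ik\theta}(1+o(1))$ uniformly for fixed $k\le M$. The determinant becomes $\prod_a(\text{common factor at }z_a)\cdot\det[e^{i(k-1)\theta_a}](1+o(1))$, which is a Vandermonde in $e^{i\theta_a}$. Each $|\text{common factor}|^2\,dA$ becomes, in the $(\theta,t)$ coordinates, $\frac{1}{2\pi}d\theta\cdot\frac{1}{\sqrt{\pi}}e^{-t^2}dt\,(1+o(1))$. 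The Jacobian $|\varphi'|$ cancels the factor $|\varphi'|$ coming from $dA$ along the boundary (arclength is $|\varphi'|^{-1}d\theta$). Integrating out $t$ yields \eqref{eq:radial-band-cue}, with the constant $1/(M!(2\pi)^M)$ coming from the $M!$ in the labelled law.

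\textbf{Step 3: $L^1$ control.} Pointwise convergence of densities on the layer holds, together with the correct total mass: the density integrates to $1$, and so does the limit. Scheffé's lemma, applied in $(\theta,t)$ coordinates, then gives $L^1$ convergence once we show that the mass outside the layer $\{\dist\le N^{-1/2}\log N\}^M$ is $o(1)$. For this we use $\int_{\text{outside}}|\phi_{j,N}|^2d\lambda_N\to0$ for each fixed $k$, which follows from the same asymptotics, combined with the Hadamard-type bound that the marginal of $z_1$ has density $\frac1M\sum_j|\phi_{j,N}|^2e^{-2NQ}$.

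\textbf{Main obstacle.} The delicate point is Step 2, namely uniform strong asymptotics for $\phi_{N+k,N}$ including the common phase factor and the $o(1)$ error uniformly across the layer. If the cited asymptotics are only stated for $j=N$, we would first obtain them for $j=N+k$ by applying the $j=N$ result to the perturbed potential $\frac{N}{N+k}Q$ at degree $N+k$. We would then track how the droplet and the map $\varphi$ move at order $k/N$. That perturbation shifts the boundary by $O(1/N)$, which is negligible on the $N^{-1/2}$ scale, and it changes the degree-dependent factor precisely into $\varphi^k$.
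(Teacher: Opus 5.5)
Your argument is correct in outline and uses the same inputs as the paper's proof: the Hedenmalm--Wennman boundary asymptotics, the factor $\varphi^k$ produced by the degree shift, and normal integration yielding $d\theta/(2\pi)$. The difference is how you obtain the $L^1$ convergence.

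The paper never looks at the $M$-point density pointwise. It proves the one-orbital estimate $\|u_{k,N}-\chi u_{0,N}e^{ik\Theta}\|_{L^2(dA)}\to0$ for $u_{k,N}=\phi_{N+k,N}e^{-NQ}$. Multilinearity then makes the two normalized Slater determinants $o(1)$-close in $L^2(dA^{\otimes M})$, so their squared moduli are $o(1)$-close in $L^1$. After normal integration the comparison density factorizes exactly as $|\det[e^{ik\theta_a}]|^2\prod_a m_N(\theta_a)$. The only remaining asymptotic input is $m_N\to1/(2\pi)$ in $L^1$ for a single orbital, with tails handled by the $L^2$ form of the expansion.

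Your pointwise-plus-Scheff\'e route needs uniform pointwise asymptotics on the $N^{-1/2}$ scale with the exact leading constant. In return it gives directly the stronger joint limit: CUE angles together with independent Gaussian normal offsets. The tail estimate also comes for free. Because the pointwise limit is a probability density, Fatou's lemma forces the mass in the layer to tend to $1$, so your separate tail argument is unnecessary.

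Several points need tightening.

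\emph{The determinant error.} Entrywise factors $(1+o(1))$ do not give $\det[\cdots](1+o(1))$ near coincident angles, where the Vandermonde vanishes. After factoring out the row amplitudes you only get an additive $o(1)$ error, which still suffices for pointwise convergence.

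\emph{The explicit profile.} Your profile has the wrong normal scale: for $Q=|z|^2/2$ the modulus is $(N/2\pi)^{1/4}e^{-N\dist(z,\partial D)^2}$. For general $Q$ it also omits $\Delta Q$. The correct modulus is $(N\Delta Q(p)/\pi)^{1/4}|\varphi'(p)|^{1/2}e^{-2N\Delta Q(p)x_\perp^2}$. Uniformity in $\theta$ comes from the cancellation of $\sqrt{\Delta Q(p)}$ under normal integration, which uses $|\mathcal B_{0,1}|^2=\pi^{-1/2}\sqrt{\Delta Q}$ on $\partial D$.

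\emph{The factor $\varphi^k$.} The claim that the shift $j=N+k$ produces exactly $\varphi^k$ is the crux and must be proved, not asserted. The paper derives $C_t'=-\log r_t$ and $\partial_t\operatorname{Re}\mathcal F_t=\log|\varphi_t|$ from the growth law $d\mu_t/dt=\omega_{D_t}$. This gives $\partial_t\mathcal F_t=\log\varphi_t$, hence $N(\mathcal F_{t_k}-\mathcal F_1)=k\log\varphi+O(k^2/N)$.

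\emph{Minor simplifications.} The expansion is already available at degree $tN$ for $t$ near $1$, so rescaling $Q$ is unnecessary. Parametrizing by $z=\varphi^{-1}((1+s/\sqrt N)e^{i\theta})$ makes $\theta=\Theta(z)$ exact, rather than only up to $O(t^2/N)$ as with foot point plus normal offset.
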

In Appendix~\ref{app:equilibrium-cue} we prove this limit by comparing adjacent
orbitals near the boundary.

Let us denote by $D_t$ the spectral droplet at
filling $t$ in the regular semiclassical regime, such that
\begin{equation}\label{eq:equilibrium-measure}
	\mu_t=2\Delta Q\,\mathbf1_{D_t}dA,
	\qquad \Delta=\partial_z\partial_{\bar z},
	\qquad \mu_t(\C)=t,
\end{equation}
where $\mu_t$ is the large-$N$ equilibrium measure with support $D_t$.
Under the regularity assumptions of Section~\ref{sec:theorem}, its growth
obeys
\begin{equation}\label{eq:rule-a-drift}
	\frac{d\mu_t}{dt}=\omega_{D_t},
\end{equation}
where $\omega_{D_t}$ is exterior harmonic measure on $\partial D_t$, viewed
from infinity.  Hedenmalm and Makarov prove the right-hand weak
derivative for regular droplet growth
\cite[Proposition~5.20]{HedenmalmMakarov}.  Hedenmalm and Wennman
state the weighted growth law for a smooth regular flow~\cite[Section~2.3]{HedenmalmWennman}.

\subsection{Quenched growth: Coulomb gas in a frozen eigenvalue field}
In this case the old spectrum is sampled first,
$Y=\{y_1,\ldots,y_N\}\sim\DPP(\Pi_H)$. We denote its law and expectation
by $\Prob_N^{\mathrm{old}}$ and $\E_N^{\mathrm{old}}$, and
abbreviate them as $\Prob_Y$ and $\E_Y$ in the proof.
For distinct $Y$, the functions $K_F(\cdot,y_i)$, $1\le i\le N$, represent
evaluation at the frozen points.  Let $P_Y$ be the orthogonal projection
in $F$ onto their span, and define
\begin{equation}\label{eq:completion}
	\Pi_Y^\perp=\Pi_F-P_Y,
	\qquad Z\mid Y\sim\DPP(\Pi_Y^\perp).
\end{equation}
For fixed $Y$, we denote this conditional law and expectation by
$\Prob_{N,M}^{Y}$ and $\E_{N,M}^{Y}$.  When $N$ and $M$ are clear, we
also write $\Prob(\,\cdot\mid Y)$ and $\E[\,\cdot\mid Y]$.
We keep $Y$ fixed and sample
$Z=\{z_1,\ldots,z_M\}$ from this projection process.

The following proposition gives an explicit joint density for the newly sampled
points.  The frozen eigenvalues enter its weight through a polynomial
whose zeros are precisely the points of $Y$.

\begin{proposition}[Exact polynomial form]\label{prop:polynomial}
	For every distinct $Y$,
	\begin{equation}\label{eq:q-space}
		\operatorname{Ran}\Pi_Y^\perp=q_Y\Poly_{<M},
		\qquad q_Y(z)=\prod_{i=1}^N(z-y_i).
	\end{equation}
	Let $G_Y$ be the $M\times M$ Gram matrix of the ordered basis
	$q_Y,q_Yz,\ldots,q_Yz^{M-1}$ in $L^2(\lambda_N)$.  Conditional on $Y$,
	\begin{equation}\label{eq:conditional-density}
		\Prob(dZ\mid Y)=
		\frac{|V(Z)|^2\prod_{a=1}^M|q_Y(z_a)|^2d\lambda_N(z_a)}
		{M!\det G_Y}.
	\end{equation}
	For $M=1$, this becomes
	\begin{equation}\label{eq:one-particle}
		\Prob(dz\mid Y)=
		\frac{|q_Y(z)|^2e^{-2NQ(z)}dA(z)}
		{\int_{\C}|q_Y(w)|^2e^{-2NQ(w)}dA(w)}.
	\end{equation}
\end{proposition}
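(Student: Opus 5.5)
The plan has three steps: identify the range of $\Pi_Y^\perp$ as the orthogonal complement in $F$ of the span of the $K_F(\cdot,y_i)$, rewrite the projection kernel using the basis $q_Y z^k$, and then use the standard formula for a projection DPP.

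\textbf{Step 1: the range.} By the reproducing property, for $f\in F$ we have
\[
\langle f,K_F(\cdot,y_i)\rangle_{L^2(\lambda_N)}=f(y_i).
\]
Hence
\[
\operatorname{Ran}\Pi_Y^\perp=F\ominus\operatorname{span}\{K_F(\cdot,y_i)\}=\{f\in\Poly_{<N+M}: f(y_i)=0,\ 1\le i\le N\}.
\]
Since the $y_i$ are distinct, a polynomial vanishing at all of them is divisible by $q_Y$, which has degree $N$. So $f=q_Y g$ with $\deg g<M$. Conversely, every such $q_Yg$ lies in $F$ and vanishes on $Y$. This gives \eqref{eq:q-space}.

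We also need $P_Y$ to be a projection of rank exactly $N$. The evaluation functionals at $N$ distinct points are linearly independent on $\Poly_{<N+M}$, because Lagrange interpolation polynomials of degree $<N\le N+M$ exist. Therefore the $K_F(\cdot,y_i)$ are linearly independent, and $\Pi_Y^\perp$ has rank $M$, consistent with $\dim q_Y\Poly_{<M}=M$.

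\textbf{Step 2: the kernel.} Set $e_k=q_Yz^{k}$ for $0\le k<M$. These vectors are linearly independent, so $G_Y=[\langle e_k,e_l\rangle]$ is positive definite. The kernel of $\Pi_Y^\perp$ with respect to $d\lambda_N$ is
\[
K_Y(z,w)=\sum_{k,l}e_k(z)(G_Y^{-1})_{kl}\overline{e_l(w)}
\]
(with the index convention for the Hermitian form fixed consistently).

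\textbf{Step 3: the density.} By \eqref{eq:dpp} applied to this rank-$M$ projection,
\[
\Prob(dZ\mid Y)=\frac1{M!}\det[K_Y(z_a,z_b)]\prod_a d\lambda_N(z_a).
\]
Write $E=[e_k(z_a)]_{a,k}$. Then
\[
[K_Y(z_a,z_b)]=E\,G_Y^{-1}E^*,
\qquad
\det[K_Y(z_a,z_b)]=\frac{|\det E|^2}{\det G_Y}.
\]
Factoring $q_Y(z_a)$ out of row $a$ gives
\[
\det E=\prod_a q_Y(z_a)\cdot\det[z_a^{k}]=\prod_a q_Y(z_a)\,V(Z).
\]
This yields \eqref{eq:conditional-density}. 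For $M=1$, $G_Y$ reduces to the scalar $\|q_Y\|^2_{L^2(\lambda_N)}$ and $V\equiv1$, which gives \eqref{eq:one-particle}.

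\textbf{Main obstacle.} The only delicate point is the identification of the range in Step 1: distinctness of $Y$ is needed both for divisibility by $q_Y$ and for $P_Y$ to be a well-defined rank-$N$ projection. Beyond that, we need $q_Y\Poly_{<M}\subset L^2(\lambda_N)$, which holds because $F\subset L^2(\lambda_N)$ is assumed. The remaining steps are routine linear algebra.
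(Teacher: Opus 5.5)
Your proposal is correct and follows the same route as the paper's proof. The reproducing property identifies $\operatorname{Ran}\Pi_Y^\perp$ as the polynomials in $F$ vanishing on $Y$, and division by $q_Y$ gives $q_Y\Poly_{<M}$. The projection-DPP density in the basis $q_Yz^k$ then gives the Gram-determinant formula; you derive the arbitrary-basis density (via $[K_Y(z_a,z_b)]=EG_Y^{-1}E^*$), which the paper simply quotes.
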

\begin{proof}
	A polynomial in $F$, where $F=\Poly_{<N+M}$, is orthogonal to every $K_F(\cdot,y_i)$ precisely when
	it vanishes at every $y_i$.  Polynomial division gives
	\eqref{eq:q-space}, and for an arbitrary basis
	$f_0,\ldots,f_{M-1}$, the projection-DPP density reads
	$|\det[f_{j-1}(z_a)]|^2/(M!\det[\langle f_i,f_j\rangle])$.
	Upon taking $f_j=q_Yz^j$, we obtain exactly \eqref{eq:conditional-density}, and its rank-one
	case reduces to \eqref{eq:one-particle}.
\end{proof}

The conditional density \eqref{eq:conditional-density} describes a Coulomb
gas in the field of the frozen eigenvalues.  In exponential form,
\[
	\Prob(dZ\mid Y)\propto e^{-\mathcal H_Y(Z)}\prod_{a=1}^M dA(z_a),
\]
where the conditional Coulomb-gas energy is
\[
	\mathcal H_Y(Z)=2N\sum_{a=1}^M Q(z_a)
	-2\sum_{a=1}^M\sum_{i=1}^N\log|z_a-y_i|
	-2\sum_{1\le a<b\le M}\log|z_a-z_b|.
\]
The new particles are confined by $Q$ and repel both the frozen charges
and one another.

When $N$ eigenvalues are fixed at the distinct points $Y$,
\eqref{eq:completion} gives the conditional law of the remaining
$M$ eigenvalues in the rank-$(N+M)$ ensemble.
The conditional kernel is the orthogonal projection onto polynomials
in $F$ that vanish at the frozen points $Y$ \cite{ShiraiTakahashiPalm}.
For labelled old and new points, the interaction factors in the two
joint densities cancel, so that their ratio is
\[
	\frac{d(\Prob_N^{\mathrm{old}}(dY)\Prob(dZ\mid Y))}
	{d\Prob_{N+M}(dY,dZ)}
	=\binom{N+M}{N}
	\frac{\prod_{j=N}^{N+M-1}h_{j,N}}{\det G_Y}.
\]
Thus, the two laws share the same
conditional distribution of the new points and differ only in the
probabilities assigned to the old spectrum.
By \eqref{eq:gram-lower} below, the ratio never exceeds
$\binom{N+M}{N}$, and any event therefore has quenched probability at most
this factor times its probability in the rank-$(N+M)$ equilibrium ensemble.

Adding $M$ eigenvalues shifts the equilibrium boundary by order $M/N$.
To compare the two growth processes, we must also consider the normal
width $N^{-1/2}$ of the polynomial modes and the typical spacing $1/M$
between boundary points in conformal angle.
The boundary shift tends to zero when $M/N\to0$, but its size relative
to the other scales depends on the batch size.  For $M\ll\sqrt N$,
we have $M/N\ll N^{-1/2}\ll 1/M$, while for $M\gg\sqrt N$, the order is
reversed.  The three scales are comparable at $M\sim\sqrt N$.
The squared ratio of the boundary shift to the normal width is $M^2/N$,
which also appears in the radial equilibrium-band calculation
\eqref{eq:radial-band-cue-correction}.

The equilibrium band provides a reference for quenched growth.
The following proposition shows that its mean spatial distribution
converges to harmonic measure on the old boundary when $M/N\to0$.

\begin{proposition}[Harmonic-measure limit of the orbital band]
	\label{prop:band-harmonic-measure}
	Suppose $Q$ satisfies the regular analytic assumptions of
	Section~\ref{sec:theorem}, and let $M=M_N\geq1$ satisfy $M/N\to0$.
	Then, $\mu^{\mathrm{eq}}_{N,M}$ converges weakly to $\omega_D$.
	If also $M\to\infty$, the empirical measure
	$\widehat\mu^{\mathrm{eq}}_{N,M}$ converges weakly to $\omega_D$
	in probability.
\end{proposition}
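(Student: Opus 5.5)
The approach has two parts. We first prove convergence of the mean measure $\mu^{\mathrm{eq}}_{N,M}$ using the fixed-weight growth of the equilibrium density. We then obtain the empirical statement from variance bounds for determinantal linear statistics.

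\emph{Mean measure.} We test $\mu^{\mathrm{eq}}_{N,M}$ against a smooth compactly supported $f$. By \eqref{eq:equilibrium-band-mean},
\[
	\int f\,d\mu^{\mathrm{eq}}_{N,M}
	=\frac1M\Bigl(\E_{N+M}\textstyle\sum f(z_a)-\E_N\sum f(z_a)\Bigr).
\]
At fixed weight $e^{-2NQ}$, the rank-$(N+M)$ ensemble is the rank-$n$ ensemble for the potential $NQ$ with filling $t=n/N=1+M/N$. I will use the known expansion of the one-point function for a regular analytic droplet, $\E_n\sum f = N\int f\,d\mu_{n/N}+O(1)$, with the $O(1)$ term equal to a continuous function of $t$ near $t=1$ (the Ameur--Hedenmalm--Makarov/Ward-identity correction, or an error $o(1)$ uniform in $t$ near $1$). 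Taking the difference gives
\[
	\int f\,d\mu^{\mathrm{eq}}_{N,M}=\frac{N}{M}\Bigl(\int f\,d\mu_{1+M/N}-\int f\,d\mu_1\Bigr)+\frac{o(1)\cdot\text{(uniform)}}{M}+\frac{r(t)-r(1)}{M}.
\]
The first term tends to $\int f\,d\omega_D$ by \eqref{eq:rule-a-drift}, since $M/N\to0$.

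The \textbf{main obstacle} is the remaining term when $M$ is bounded, because an $O(1)$ error divided by $M$ does not vanish. I will handle it directly. For fixed $M$ (and more generally $M\ll N$), I will use the boundary asymptotics of the individual orthonormal polynomials $|\phi_{j,N}|^2e^{-2NQ}$ for $j=N+k$, $|k|\ll N$. These are the same strong asymptotics underlying Proposition~\ref{prop:radial-band-cue} and Appendix~\ref{app:equilibrium-cue}. They show that each such density is a Gaussian of width $N^{-1/2}$ in the normal direction around $\partial D_{j/N}$, and that in the conformal angle it has density $|\varphi'|/2\pi$, which is exactly harmonic measure. Each term in \eqref{eq:band-density}, divided by $1$, therefore converges weakly to $\omega_{D_{j/N}}$, uniformly for $j-N\le M=o(N)$. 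Since $D_{j/N}\to D$ in Hausdorff distance, averaging over $j$ gives $\mu^{\mathrm{eq}}_{N,M}\to\omega_D$. Weak convergence then follows from test functions by tightness, because all the mass is concentrated near $\partial D$ by the Gaussian decay off the boundary. This orbital-by-orbital route avoids the division problem altogether, so I would make it the main argument and keep the filling-derivative computation as a consistency check.

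\emph{Empirical measure.} $Z^{\mathrm{eq}}_{N,M}$ is a projection DPP with $M$ points. For a linear statistic, $\operatorname{Var}\sum f(z_a)\le \sum_a \E f(z_a)^2\cdot$, and more precisely for projection kernels
\[
	\operatorname{Var}\sum f=\tfrac12\iint|f(z)-f(w)|^2|K(z,w)|^2\,d\lambda_N\,d\lambda_N\le \int |f|^2 K(z,z)\,d\lambda_N\le M\|f\|_\infty^2 .
\]
Hence $\operatorname{Var}\int f\,d\widehat\mu^{\mathrm{eq}}_{N,M}\le\|f\|_\infty^2/M\to0$ when $M\to\infty$. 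Chebyshev's inequality combined with the mean convergence gives $\int f\,d\widehat\mu\to\int f\,d\omega_D$ in probability for each $f$. Running this over a countable convergence-determining family of $f$, together with tightness inherited from the mean, yields weak convergence in probability.
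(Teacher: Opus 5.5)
Your proposal is correct and follows essentially the paper's own proof. You show that each orbital measure $|\phi_{j,N}|^2\,d\lambda_N$, a Gaussian ridge of width $N^{-1/2}$ uniform in conformal angle, converges weakly to $\omega_{D_{j/N}}$ uniformly over $N\le j<N+M$; you average over the band using continuity of the droplet flow; and you deduce the empirical statement from the projection-DPP bound $\operatorname{Var}\int f\,d\widehat\mu^{\mathrm{eq}}_{N,M}\le\|f\|_\infty^2/M$ and Chebyshev, rightly setting aside the filling-derivative route because of the $O(1)/M$ term. One point to make explicit: when $M\to\infty$, the uniformity in $j$ comes from the uniform-in-$t$ $L^2$ orbital expansion of \cite[Theorem~3.2]{HedenmalmWennman}, not from the fixed-$k$ asymptotics used in the appendix.
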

\begin{proof}
	Let us write $\nu_{j,N}=|\phi_{j,N}|^2d\lambda_N$ and set $t_j=j/N$.
	The uniform $L^2$ orbital expansion of
	\cite[Theorem~3.2]{HedenmalmWennman}, together with its leading
	coefficient \cite[Theorem~1.5]{HedenmalmWennman}, applies for
	$t_j$ in a fixed neighborhood of $1$.  The $L^2$ error gives an $o(1)$
	error for bounded test functions,
	uniformly in $t_j$.  In the exterior conformal coordinate for
	$D_{t_j}$, the leading weighted density is a Gaussian ridge of
	width $O(N^{-1/2})$ about the unit circle.  Uniform radial Laplace
	integration with the explicit leading coefficient gives
	$d\theta/(2\pi)$, while the mass outside a fixed boundary collar
	is negligible.  Thus,
	for every bounded continuous $f$,
	\[
		\sup_{N\leq j<N+M}\left|
		\int f\,d\nu_{j,N}-\int f\,d\omega_{D_{t_j}}
		\right|\longrightarrow0.
	\]
	Here $\sup_{N\leq j<N+M}|t_j-1|\leq M/N\to0$.  The smooth
	droplet flow makes $\omega_{D_{t_j}}\Rightarrow\omega_D$
	uniformly over this range.  Averaging $\nu_{j,N}$ and using
	\eqref{eq:equilibrium-band-mean} proves the mean-measure assertion.

	For real bounded $f$, the rank-$M$ projection-DPP variance identity
	gives
	\[
		\operatorname{Var}\left(\int f\,d\widehat\mu^{\mathrm{eq}}_{N,M}\right)
		\leq\frac{\|f\|_\infty^2}{M}.
	\]
	As $M\to\infty$, Chebyshev's inequality shows that each sampled average
	concentrates around its mean.  Since the mean converges to the
	harmonic-measure average, the empirical measure converges to $\omega_D$
	in probability.  Here weak convergence in probability means that
	$\int f\,d\widehat\mu^{\mathrm{eq}}_{N,M}\to\int f\,d\omega_D$ in
	probability for each bounded continuous $f$.
\end{proof}

\section{Quenched harmonic measure for one batch}\label{sec:theorem}

We now turn to a single batch of eigenvalues added to a frozen spectrum.
Below we show that for a regular analytic droplet and $\log N\ll M\ll N$, their spatial
distribution converges to exterior harmonic measure on the old boundary,
just as for equilibrium growth.  This convergence holds conditionally
on the old spectrum, with the probability of exceptional frozen
configurations tending to zero.

We assume that $Q$ is $1$-admissible in the sense of
\cite[Definition~1.2]{HedenmalmWennman}.  Thus, $Q$ is globally $C^2$ and
is real analytic and strictly subharmonic near its unit-filling droplet.
At infinity, we require $\liminf_{|z|\to\infty}Q(z)/\log|z|>1$.
The droplet has a smooth simple closed boundary and equals the full
coincidence set of the obstacle solution.
These hypotheses give, for $t$ in a fixed neighborhood of $1$, a
regular flow with simple analytic boundaries and
$D_t=\{z:\widehat Q_t(z)=Q(z)\}$, where $\widehat Q_t$ is the
obstacle solution \cite[Sections~1.5 and 2.1]{HedenmalmWennman}.
Thus, $D=D_1$ is a compact regular Jordan droplet.  Its exterior conformal map
is normalized by
\begin{equation}\label{eq:conformal-map}
	\varphi:\C\setminus D\longrightarrow\C\setminus\overline{\mathbb D},
	\qquad \varphi(z)=z/r+O(1),\quad
	r=\capc(D)>0.
\end{equation}
The constant $r$ is the logarithmic capacity, or exterior conformal radius,
of $D$.  Harmonic measure $\omega_D$ is obtained by mapping the uniform
angular measure $d\theta/(2\pi)$ to $\partial D$ using $\varphi^{-1}$.

The following theorem makes this limit precise: averages over the new
eigenvalues converge to harmonic-measure averages for typical frozen
spectra.

\begin{theorem}[Harmonic-measure limit for quenched growth]
	\label{thm:quenched}
	Let $Q$ satisfy the preceding regular analytic assumptions, and let
	$M=M_N$ with
	\begin{equation}\label{eq:mesoscopic-regime}
		\frac{M_N}{\log N}\longrightarrow\infty,
		\qquad \frac{M_N}{N}\longrightarrow0.
	\end{equation}
	By $Y=\{y_1,\ldots,y_N\}$ we denote the eigenvalues of the rank-$N$
	normal-matrix ensemble with weight $e^{-2NQ}dA$. With $Y$ held
	fixed, let $Z=\{z_1,\ldots,z_M\}$ be the $M$ added eigenvalues of
	quenched growth, that is, the $\beta=2$ Coulomb gas in the field of the
	frozen eigenvalues,
	\[
		\Prob(dZ\mid Y)\ \propto\
		\prod_{a<b}|z_a-z_b|^2
		\prod_{a=1}^{M}\Bigl(\prod_{i=1}^{N}|z_a-y_i|^2\Bigr)
		e^{-2NQ(z_a)}\,dA(z_a),
	\]
	as in \eqref{eq:conditional-density}.  Then the empirical measure
	$\widehat\mu_Z=M^{-1}\sum_{a=1}^{M}\delta_{z_a}$ converges weakly to
	$\omega_D$, conditionally on $Y$, outside a set of frozen spectra whose
	$\Prob_N^{\mathrm{old}}$-probability tends to zero.  More precisely, for
	every bounded continuous $f\colon\C\to\mathbb R$ and every
	$\eta,\varepsilon>0$,
	\begin{equation}\label{eq:quenched-convergence}
		\Prob_N^{\mathrm{old}}\biggl\{
		\Prob\biggl(
		\Bigl|\frac1M\sum_{a=1}^{M}f(z_a)
		-\int_{\partial D}f\,d\omega_D\Bigr|>\eta
		\ \Bigm|\ Y\biggr)>\varepsilon\biggr\}
		\longrightarrow0.
	\end{equation}
	In particular, the conditional one-point density of the added
	eigenvalues, divided by $M$, converges weakly to $\omega_D$ in
	$\Prob_N^{\mathrm{old}}$-probability, and so does its average over the
	frozen spectrum.
\end{theorem}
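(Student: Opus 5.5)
The plan is to separate the conditional randomness of $Z$ given $Y$ from the randomness of the frozen spectrum.

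\emph{Conditional concentration.} Given $Y$, the batch $Z$ is the rank-$M$ projection process $\DPP(\Pi_Y^\perp)$. The same projection-DPP variance identity used in the proof of Proposition~\ref{prop:band-harmonic-measure} gives
\[
\operatorname{Var}\Bigl(\frac1M\sum_a f(z_a)\Bigm| Y\Bigr)\le \|f\|_\infty^2/M .
\]
This bound is deterministic in $Y$. By Chebyshev's inequality, \eqref{eq:quenched-convergence} therefore reduces to showing that the conditional mean measure
\[
\mu_Y=\frac1M\,\E\Bigl[\sum_a\delta_{z_a}\Bigm| Y\Bigr]=\frac1M\,\Pi_Y^\perp(z,z)\,e^{-2NQ(z)}\,dA(z)
\]
satisfies $\int f\,d\mu_Y\to\int f\,d\omega_D$ in $\Prob_N^{\mathrm{old}}$-probability. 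So everything rests on this mean statement about the random measure $\mu_Y$.

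\emph{Localization to a collar of $\partial D$.} I would show that, outside a set of $Y$ of vanishing probability, $\mu_Y$ puts $o(1)$ mass outside $\{\dist(z,\partial D)<\delta\}$ for each fixed $\delta>0$.
\begin{itemize}
\item \emph{Exterior.} I would use the bound on the Radon--Nikodym ratio stated after Proposition~\ref{prop:polynomial}. The quenched probability of any event is at most
\[
\binom{N+M}{N}\le \exp\bigl(M\log(e(N+M)/M)\bigr)
\]
times its probability in the rank-$(N+M)$ equilibrium ensemble. In that ensemble, the probability of finding $\gtrsim \varepsilon M$ points at distance $\ge\delta$ outside $D_{1+M/N}$ is at most $\exp(-cNM)$, since each such point costs $e^{-cN}$ by the obstacle-problem decay of $K_{N+M}(z,z)e^{-2NQ}$. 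This beats the combinatorial factor comfortably.
\item \emph{Interior.} By the screening of the old charges, $\frac1N\log|q_Y(z)|^2-2Q(z)$ is close to its equilibrium value, a constant $-2c_D$ times a negative gap, on compacts of $\mathrm{int}\,D$. I would quantify this with a large-deviation estimate for the logarithmic potential of $\DPP(\Pi_H)$, smoothed on scale $N^{-1/2}$. Then the weight $|q_Y|^2e^{-2NQ}$ relative to its collar size is $e^{-cN}$ inside. Combined with the Gram bound \eqref{eq:gram-lower}, this gives the $o(1)$ interior mass.
\end{itemize}
The hypothesis $M\gg\log N$ enters exactly in turning these estimates, after a union bound over a polynomially large net of test configurations, into statements holding for all but a vanishing fraction of $Y$.

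\emph{Identifying the limit through polynomial moments.} After localization, tightness gives subsequential limits $\mu_Y\Rightarrow\nu$ with $\nu$ supported on $\partial D$. To identify $\nu$, I would use Proposition~\ref{prop:screening}. For each fixed $k$, it writes $\int z^k\,d\mu_Y$ as three pieces:
\begin{itemize}
\item the band value $\int z^k\,d\mu^{\mathrm{eq}}_{N,M}$, which tends to $\int z^k\,d\omega_D$ by Proposition~\ref{prop:band-harmonic-measure};
\item $M^{-1}$ times a centered linear statistic $\sum_i z^k(y_i)-\E\sum_i z^k(y_i)$ of the frozen spectrum;
\item a remainder coming from modes of degree $\ge N+M$.
\end{itemize}
For a polynomial test function, the fluctuation term has variance $O(1)$ by the Ameur--Hedenmalm--Makarov fluctuation theory, or simply by the elementary DPP variance bound $\int\!\!\int|z^k(z)-z^k(w)|^2|K_H|^2$. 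After division by $M$ it is therefore $o_{\Prob}(1)$. Taking complex conjugates handles $\bar z^k$, so all moments $\int \mathrm{Re}(p)\,d\nu$ agree with $\int\mathrm{Re}(p)\,d\omega_D$. By Walsh's theorem, real parts of polynomials are dense in $C_{\mathbb R}(\partial D)$ for the Jordan curve $\partial D$. Hence $\nu=\omega_D$, and the continuous bounded $f$ in the theorem reduce to these test functions via the localization step.

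\emph{Main obstacle.} I expect the crux to be the finite-rank remainder in Proposition~\ref{prop:screening}. It is governed by the overlap of $q_Y\Poly_{<M}$ with polynomial modes of degree $N+M$ and above, that is, by how far $P_Y$ fails to lie in $\Pi_H$. My first attempt would be to bound it by $\operatorname{Tr}\bigl((\Pi_F-\Pi_H)P_Y\bigr)$-type quantities that are $o(M)$ in expectation. Their expectation is computable from $K_H$ and $K_F$. I would then apply Markov's inequality. If that fails, I would fall back on the ratio bound to transfer the estimate from the rank-$(N+M)$ ensemble, where the remainder vanishes in mean, again using $M\gg\log N$.
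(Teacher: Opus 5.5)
Your first step is sound. $\Pi_Y^\perp$ is a rank-$M$ projection, so $\operatorname{Var}(M^{-1}\sum_a f(z_a)\mid Y)\le\|f\|_\infty^2/M$ for every $Y$, and the theorem reduces to the conditional mean $\mu_Y$. The exterior localization is also fine. The operator inequality $0\le\Pi_Y^\perp\le\Pi_F$ gives it more directly, uniformly in $Y$, via \eqref{eq:confinement}, without the $\binom{N+M}{N}$ factor.

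\textbf{Interior localization.} The genuine gap is here. The one-body weight has no exponential gap inside $D$. Since $Q-U_1=C_1$ on all of $D$, the mean-field exponent $N^{-1}\log|q_Y|^2-2Q$ equals the same constant $-2C_1$ in the interior as on $\partial D$. Exactly, $\E_Y\rho_Y=R_{N+1,N}$ by \eqref{eq:expected-field}, and this is of order $N$ throughout the bulk, as on the inner side of the boundary. So the claim that $|q_Y|^2e^{-2NQ}$ is $e^{-cN}$ smaller inside than in the collar is false. No large-deviation estimate for the frozen potential can produce a gap that is not there. The frozen charges only make the interior flat. What expels the new points from the interior is their own Vandermonde repulsion, on the $M^2$ scale.

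\textbf{Moment identification.} Without the interior step, your moment identification also fails. The moments $\int z^k\,d\nu$ and $\int\bar z^k\,d\nu$ see only the exterior potential of $\nu$. For instance, for $Q=|z|^2/2$ the normalized area measure on the unit disk has the same holomorphic and antiholomorphic moments as $d\theta/(2\pi)$. Walsh's theorem helps only once you know $\nu$ lives on $\partial D$.

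Even granting that, the remainder $\mathcal R_p(Y)$ is left uncontrolled. Since \eqref{eq:screening} is an exact rewriting of $\Tr(p\,\Pi_Y^\perp)$, proving $\mathcal R_p(Y)=o(M)$ is equivalent to the polynomial-moment statement itself. The paper leaves that bound open. Your fallback does not close it. Under the rank-$(N+M)$ ensemble with random labels, the mean measure of the ``new'' points is $R_{N+M,N}\,dA/(N+M)$, spread over the whole droplet. The factor $\binom{N+M}{N}\approx e^{M\log(N/M)}$ would then require deviation bounds at speed $M\log N$, which you do not supply.

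\textbf{The missing idea.} This is the paper's energy comparison, which makes both the interior localization and the moment identity unnecessary. Write $\Prob(\widehat\mu_Z\notin\mathcal U\mid Y)$ as the partition-function ratio \eqref{eq:partition-ratio}.
\begin{itemize}
\item \emph{Denominator.} For every $Y$, $\det G_Y\ge\prod_{k<M}h_{N+k,N}\ge h_{N,N}^M r^{M(M-1)}e^{o(M^2)}$ by \eqref{eq:gram-lower} and \eqref{eq:capacity-norms}.
\item \emph{Vandermonde.} Since $\omega_D$ uniquely maximizes $\mathcal E$ over probability measures on $D$, with value $\log r$, every configuration in $K_\delta^M$ with $\widehat\mu_Z\notin\mathcal U$ satisfies $|V(Z)|^2\le e^{M^2(\log r-c_{\mathcal U}+o(1))}$. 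This penalizes interior mass and every other departure from $\omega_D$ at once.
\item \emph{Frozen field.} On the Markov event $\|q_Y\|^2\le N^Bh_{N,N}$, the submean inequality gives $\rho_Y\le CN^{B+1}$ on the collar. This costs only $e^{O(M\log N)}$.
\end{itemize}
The result is $e^{-c_{\mathcal U}M^2+O(M\log N)+o(M^2)}$. This comparison is where $M\gg\log N$ genuinely enters, not a union bound over a net of configurations.
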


For each frozen spectrum $Y$, \eqref{eq:quenched-convergence} measures
the probability that a new batch deviates from harmonic measure.
The assertion is that this conditional probability tends to zero
in probability when $Y$ is sampled from the old ensemble.
In terms of the empirical measure, this means that, for every open
weak neighborhood $\mathcal U$ of $\omega_D$,
\begin{equation}\label{eq:weak-neighborhood-form}
	\Prob(\widehat\mu_Z\notin\mathcal U\mid Y)\longrightarrow0
	\qquad\text{in }\Prob_N^{\mathrm{old}}\text{-probability}.
\end{equation}
Indeed, one can choose finitely many bounds on test-function averages
that together ensure $\widehat\mu_Z\in\mathcal U$.  Applying
\eqref{eq:quenched-convergence} to each bound and adding the failure
probabilities gives \eqref{eq:weak-neighborhood-form}.
Conversely, each individual test-function bound defines a weak
neighborhood of $\omega_D$.

Averaging the conditional deviation probability over $Y$ gives
convergence under the joint sampling of the old spectrum and the new
batch.  Thus, for every bounded continuous $f$,
$M^{-1}\sum_a f(z_a)$ converges in probability to
$\int f\,d\omega_D$.

\begin{remark}[Coulomb-gas formulation]
	Conditionally on $Y$, the added eigenvalues are $M$ unit charges of a
	two-dimensional $\beta=2$ Coulomb gas in the external potential
	$2NQ(z)-2\sum_{i}\log|z-y_i|$.  In the mean-field approximation this
	potential is constant on $D$, since $Q-U_1=C_1$ there.  The new charges
	therefore see a conductor with equilibrium charge distribution
	$\omega_D$.  Theorem~\ref{thm:quenched} proves this harmonic-measure law
	for $\log N\ll M\ll N$ by controlling the fluctuations of $q_Y$.
\end{remark}

Together with Proposition~\ref{prop:band-harmonic-measure}, the theorem
shows that the two growth rules have the same limiting spatial
distribution when $\log N\ll M\ll N$.

\begin{corollary}[Comparison with equilibrium growth]
	\label{cor:comparison}
	Under the assumptions of Theorem~\ref{thm:quenched}, both the orbital-band
	process \eqref{eq:equilibrium-band-process} and one quenched mesoscopic
	batch have limiting spatial distribution $\omega_D$.  For the orbital
	band, the empirical measure converges in probability.  For the quenched
	batch, convergence holds conditionally as in
	\eqref{eq:quenched-convergence}.  By \eqref{eq:rule-a-drift}, this common
	limit is the derivative of the equilibrium measure at filling $t=1$.
\end{corollary}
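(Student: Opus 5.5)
This corollary only assembles results already stated earlier in the paper; no new estimates are needed. The proof will verify three things in turn: the hypotheses match, each of the two convergence statements holds, and the common limit is the derivative of the equilibrium measure.

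\emph{Orbital band.} Under the assumptions of Theorem~\ref{thm:quenched} we have $M_N/\log N\to\infty$. Hence $M_N\to\infty$, and also $M_N/N\to0$. These are exactly the hypotheses of both parts of Proposition~\ref{prop:band-harmonic-measure}, and the regular analytic assumptions on $Q$ are shared. That proposition therefore gives two conclusions. First, the mean measure $\mu^{\mathrm{eq}}_{N,M}$ converges weakly to $\omega_D$. Second, for every bounded continuous $f$, $\int f\,d\widehat\mu^{\mathrm{eq}}_{N,M}\to\int f\,d\omega_D$ in probability. This is the stated convergence in probability for the band process \eqref{eq:equilibrium-band-process}.

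\emph{Quenched batch.} We apply Theorem~\ref{thm:quenched} directly to get \eqref{eq:quenched-convergence} for each bounded continuous $f$ and each $\eta,\varepsilon>0$. This is precisely the conditional form claimed in the corollary. To state ``limiting spatial distribution $\omega_D$'' in the same sense as for the band, we average the conditional deviation probability over $Y$, as in the remark following the theorem. Since conditional probabilities are bounded by $1$, the expectation $\E_N^{\mathrm{old}}[\Prob(\,|\cdot|>\eta\mid Y)]$ is at most $\varepsilon$ plus the $\Prob_N^{\mathrm{old}}$-probability of the exceptional set, and that probability tends to zero. Letting $\varepsilon\downarrow0$ gives convergence in probability under the joint law. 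The theorem's final clause also yields convergence of the conditional one-point density divided by $M$, and of its $Y$-average. So both processes share the limit $\omega_D$: one in probability, the other conditionally and hence also jointly.

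\emph{Identification of the limit.} Under the regular analytic assumptions, \eqref{eq:rule-a-drift} holds at $t=1$ with $D_1=D$, by \cite[Proposition~5.20]{HedenmalmMakarov} and \cite[Section~2.3]{HedenmalmWennman}. So $d\mu_t/dt|_{t=1}=\omega_{D_1}=\omega_D$, with the derivative understood weakly. The only point needing care is that the regular flow exists for $t$ in a neighborhood of $1$; this is guaranteed by the $1$-admissibility hypothesis recalled before \eqref{eq:conformal-map}. No step is a genuine obstacle. The mild subtlety is recording that the two modes of convergence differ (unconditional versus conditional on $Y$) while the limit is the same.
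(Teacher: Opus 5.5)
Your proposal is correct and follows the paper's own proof: Proposition~\ref{prop:band-harmonic-measure} applies because $M/\log N\to\infty$ forces $M\to\infty$ alongside $M/N\to0$, Theorem~\ref{thm:quenched} gives the conditional statement, and \eqref{eq:rule-a-drift} identifies the common limit. Your averaging over $Y$ to get joint convergence in probability is a correct optional addition; the paper records it in the discussion after the theorem rather than in the corollary's proof.
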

\begin{proof}
	Proposition~\ref{prop:band-harmonic-measure} gives the finite-band
	limit, since the theorem's regime implies $M\to\infty$ and $M/N\to0$.
	Theorem~\ref{thm:quenched} gives the quenched limit, and
	\eqref{eq:rule-a-drift} identifies the continuum derivative.
\end{proof}

\section{Proof of Theorem~\ref{thm:quenched}}\label{sec:proof}
The proof rests on an energy comparison.  Harmonic measure maximizes
the logarithmic interaction energy on the droplet, so batches whose
spatial distribution stays away from it are suppressed on the $M^2$
scale.  For typical frozen spectra, our bound on the frozen field
introduces an error of order $M\log N$, which is smaller when
$M\gg\log N$.  A separate confinement estimate controls the probability
that new eigenvalues leave a small neighborhood of the droplet.

By Proposition~\ref{prop:polynomial}, the conditional probability that
the added eigenvalues fail to concentrate near $\omega_D$ is a ratio of
partition functions: for an open weak neighborhood $\mathcal U$ of
$\omega_D$ and $\widehat\mu_Z=M^{-1}\sum_a\delta_{z_a}$,
\begin{equation}\label{eq:partition-ratio}
	\Prob(\widehat\mu_Z\notin\mathcal U\mid Y)
	=\frac{1}{M!\det G_Y}
	\int_{\{\widehat\mu_Z\notin\mathcal U\}}
	|V(Z)|^2\prod_{a=1}^{M}|q_Y(z_a)|^2\,d\lambda_N(z_a).
\end{equation}
Orthogonal-polynomial norms bound the denominator below for every $Y$
(Section~\ref{sec:proof-normalization}).  For typical frozen spectra, we
bound $|q_Y|^2e^{-2NQ}$ near $D$ (Section~\ref{sec:proof-envelope}).
The logarithmic-energy gap outside $\mathcal U$ bounds the Vandermonde determinant
(Section~\ref{sec:proof-energy}).
A projection-kernel estimate controls escape from this neighborhood
(Section~\ref{sec:proof-confinement}).
We use the leading-coefficient asymptotics, local submean inequality,
exterior bound, and uniform $L^2$ orbital expansion of
\cite{HedenmalmWennman}.  Constants may depend on $Q$ and the fixed collar,
but not on $N$, $M$, or $Y$.

\subsection{The orthogonal-polynomial normalization}
\label{sec:proof-normalization}
For a filling $t$ near $1$,
let $\mu_t$ be the equilibrium eigenvalue measure of mass $t$, supported
on $D_t$, and define its logarithmic potential by
\[
	U_t(z)=\int\log|z-w|\,d\mu_t(w).
\]
The weighted equilibrium condition gives a real constant $C_t$ such that
\[
	Q-U_t=C_t\quad\hbox{on }D_t,
	\qquad
	Q-U_t\ge C_t\quad\hbox{on }\C.
\]
Thus, the obstacle solution
\begin{equation}\label{eq:obstacle}
	\widehat Q_t=U_t+C_t
\end{equation}
coincides with $Q$ on $D_t$, lies below $Q$ on $\C$, and is harmonic
on $\C\setminus D_t$.

Let $\varphi_t$ be the normalized exterior conformal map of $D_t$ with logarithmic capacity $r_t=\capc(D_t)$.
We introduce a bounded holomorphic function $\mathcal Q_t$  on $\C\setminus D_t$,
whose real part has boundary values $Q|_{\partial D_t}$ and whose
imaginary part vanishes at infinity. Below, we put $A_t=\mathcal Q_t(\infty)$.
Because of uniqueness of the exterior harmonic extension,
one obtains
\[
	\widehat Q_t=t\log|\varphi_t|+\operatorname{Re}\mathcal Q_t
	\qquad\hbox{on }\C\setminus D_t.
\]
Since $U_t(z)=t\log|z|+o(1)$ and
$\varphi_t(z)=z/r_t+O(1)$ at infinity, comparison with
\eqref{eq:obstacle} yields the relation $C_t=A_t-t\log r_t$.

The pointwise asymptotic of
\cite[Theorem~1.3 and Remark~1.4(b)]{HedenmalmWennman},
evaluated at infinity, gives the leading coefficient of the normalized
degree-$n$ polynomial~\eqref{eq:orthogonal-polynomial} as
$N^{1/4}r_t^{-n-1/2}e^{NA_t}
	\mathcal B_{0,t}(\infty)(1+O(N^{-1}))$ for $t=n/N$ near $1$.
Using \cite[Theorem~1.5]{HedenmalmWennman} we readily determine
$\mathcal B_{0,t}=\pi^{-1/4}e^{H_{Q,t}}$, where
$\operatorname{Re}H_{Q,t}=\tfrac14\log\Delta Q$ on $\partial D_t$.
Note that strict subharmonicity keeps $\mathcal B_{0,t}(\infty)$ bounded above and
away from zero, uniformly over this compact family of boundaries.
Taking the reciprocal square of the leading coefficient gives (with
$C_t=A_t-t\log r_t$),
\begin{equation}\label{eq:norm-asymptotic}
	\log h_{n,N}=-2NC_t-\tfrac12\log N+O(1)
\end{equation}
uniformly for $t=n/N$ near $1$.

The equilibrium growth law \eqref{eq:rule-a-drift}, valid on both sides
of $t=1$ in this regular flow \cite[Section~2.3]{HedenmalmWennman},
also determines the derivative of $C_t$.
Let us fix a point $x$ that remains inside $D_t$ for all $t$ near $1$.
Since $d\mu_t/dt=\omega_{D_t}$, differentiating the logarithmic
potential gives
\[
	\frac{d}{dt}U_t(x)
	=\int_{\partial D_t}\log|x-w|\,d\omega_{D_t}(w)
	=\log r_t,
\]
where the last equality uses the constant logarithmic potential of
harmonic measure inside the droplet.
Because $Q$ is independent of $t$, differentiating
$Q(x)=U_t(x)+C_t$ now yields $0=\log r_t+C_t'$. Consequently,
\begin{equation}\label{eq:norm-derivative}
	\frac{d}{dt}(-2C_t)=2\log r_t.
\end{equation}
As $M/N\to0$, the smooth regular flow and
\eqref{eq:norm-asymptotic}--\eqref{eq:norm-derivative} give, uniformly
for $0\le k<M$,
\begin{equation}\label{eq:capacity-norms}
	\log\frac{h_{N+k,N}}{h_{N,N}}
	=2k\log r+O\!\left(\frac{k^2}{N}+1\right).
\end{equation}
An $o(M)$ remainder uniform in $k<M$ would suffice below.

Consider the Gram matrix $G_Y$ of
$q_Y,q_Yz,\ldots,q_Yz^{M-1}$ in Proposition~\ref{prop:polynomial}. Let $g_k$ be the polynomial obtained from $q_Yz^k$ by subtracting
its projections onto the polynomials already orthogonalized.
Since these subtractions involve only lower-degree terms,
$g_k$ remains monic of degree $N+k$.

Subtracting earlier basis elements leaves the Gram determinant
unchanged.  The resulting polynomials are mutually orthogonal,
so their Gram matrix is diagonal and
\[
	\det G_Y=\prod_{k=0}^{M-1}\|g_k\|_{L^2(\lambda_N)}^2.
\]
Since $g_k$ and $p_{N+k,N}$ are both monic of degree $N+k$,
their difference has lower degree and is therefore orthogonal to
$p_{N+k,N}$.  Thus, $\|g_k\|_{L^2(\lambda_N)}^2
	=h_{N+k,N}
	+\|g_k-p_{N+k,N}\|_{L^2(\lambda_N)}^2
	\ge h_{N+k,N}$.
Each factor in the product for $\det G_Y$ is therefore at least
$h_{N+k,N}$.
Therefore, for every distinct $Y$, we have a lower bound on $\det G_Y$:
\begin{equation}\label{eq:gram-lower}
	\det G_Y\ge\prod_{k=0}^{M-1}h_{N+k,N},\qquad
	\log\frac{\det G_Y}{h_{N,N}^{M}}
	\ge M(M-1)\log r+o(M^2).
\end{equation}

The second inequality follows by summing \eqref{eq:capacity-norms}, which
is uniform in $Y$.  Since $\log M!=O(M\log M)=o(M^2)$, the factorial does not affect the leading $M^2$ exponent.

\subsection{A one-sided bound on the frozen polynomial}
\label{sec:proof-envelope}
We compute the mean squared modulus of the frozen polynomial
$q_Y(z)=\prod_i(z-y_i)$.
The identity
$|V(Y)|^2\prod_i|z-y_i|^2=|V(Y\cup\{z\})|^2$
shows that the integral over $Y$ is the same integral that defines
the rank-$(N+1)$ one-point intensity, up to normalization.
The one-point intensity counts all $N+1$ eigenvalues, so comparison
with the old ensemble introduces the normalization factor
\[
	\frac{\mathcal Z_{N+1,N}}
	{(N+1)\mathcal Z_{N,N}}=h_{N,N},
\]
by \eqref{eq:tau-normalizer}.  Therefore
\begin{equation}\label{eq:expected-field}
	\E_Y|q_Y(z)|^2=h_{N,N}\,K_{N+1}(z,z),
	\qquad
	\E_Y\bigl[\rho_Y(z)\bigr]=R_{N+1,N}(z),
	\qquad
	\rho_Y=\frac{|q_Y|^2e^{-2NQ}}{h_{N,N}}.
\end{equation}
Here, the first identity is the $\beta=2$ characteristic-polynomial formula
$\E|\det(z-X)|^2=h_{N,N}K_{N+1}(z,z)$.  The second identifies the averaged
frozen field with the next-rank eigenvalue density, which is $O(N)$ on the
droplet.  Integrating \eqref{eq:expected-field} over
$\C$ gives
\begin{equation}\label{eq:expected-q}
	\E_Y\|q_Y\|_{L^2(\lambda_N)}^2=(N+1)h_{N,N}.
\end{equation}
For any fixed $B>1$, let $\mathcal G_N$ be the set of frozen spectra
satisfying $\|q_Y\|_{L^2(\lambda_N)}^2\le N^B h_{N,N}$.
Markov's inequality bounds the probability of exceeding this threshold
by the mean squared norm divided by the threshold.  Using
\eqref{eq:expected-q}, we obtain
\begin{equation}\label{eq:good-event}
	\Prob_Y(\mathcal G_N^{\mathsf c})
	\le (N+1)N^{-B}\longrightarrow0.
\end{equation}
For these spectra, the norm bound also controls the frozen field
pointwise near the droplet.  Fix a compact collar $K$ of $D$.
The local weighted submean inequality
\cite[Lemma~2.2]{HedenmalmWennman} gives
\begin{equation}\label{eq:random-envelope}
	\sup_{z\in K}\rho_Y(z)
	\le C_K N^{B+1},\qquad Y\in\mathcal G_N.
\end{equation}

\subsection{The logarithmic-energy gap}
\label{sec:proof-energy}
For a probability measure $\nu$ on a compact set put
\begin{equation}\label{eq:log-energy}
	\mathcal E(\nu)=\iint\log|z-w|\,d\nu(z)d\nu(w).
\end{equation}
Among all probability measures supported in $D$, exterior harmonic
measure $\omega_D$ uniquely maximizes the logarithmic energy:
$\mathcal E(\omega_D)=\log r$.
It is therefore called the logarithmic equilibrium measure of $D$.
Although the measures in this variational problem may occupy the
whole droplet, the maximizing measure is supported on its boundary
\cite{SaffTotik}.  Given an open weak neighborhood
$\mathcal U$ of $\omega_D$, choose a small closed collar
$K_\delta=\{z:\dist(z,D)\le\delta\}\Subset\operatorname{int}K$.
The bound \eqref{eq:random-envelope} applies on $K_\delta$.
Compactness, upper semicontinuity of $\mathcal E$, and uniqueness of
$\omega_D$ imply that
one can choose $\delta,c_{\mathcal U}>0$ such that
\begin{equation}\label{eq:energy-gap}
	\sup_{\substack{\nu\in\mathcal P(K_\delta)\\
			\nu\notin\mathcal U}}
	\mathcal E(\nu)\le\log r-2c_{\mathcal U}.
\end{equation}
If this failed for every shrinking collar, a sequence of near-maximizers
would converge to a measure supported on $D$, outside $\mathcal U$,
with energy at least $\log r$, contradicting uniqueness.

The Vandermonde determinant involves the logarithmic interaction
between distinct points.  To compare it with the energy of a measure,
we first remove the singularity at coincident points by setting
\[
	\ell_T(z,w)=\max\{\log|z-w|,-T\}.
\]
This replacement can only increase each pair interaction.  The double
integral against $\widehat\mu_Z$ also includes $M$ diagonal terms,
each equal to $-T$.  Correcting for these terms gives
\[
	M^{-2}\log|V(Z)|^2
	\le\iint\ell_T(z,w)\,d\widehat\mu_Z(z)d\widehat\mu_Z(w)
	+T/M.
\]

For fixed $T$, the truncated interaction is continuous, so its energy
depends continuously on the measure.  As $T$ increases, these energies
decrease to $\mathcal E$.  Compactness of $S=\{\nu\in\mathcal P(K_\delta):\nu\notin\mathcal U\}$
ensures that their maxima over $S$ decrease to
$\sup_S\mathcal E$.  By \eqref{eq:energy-gap}, we can therefore
choose a fixed $T$ large enough that
\[
	\iint\ell_T(z,w)\,d\nu(z)d\nu(w)
	\le \log r-c_{\mathcal U}
	\qquad(\nu\in S).
\]
Since $T/M\to0$, every configuration in $K_\delta^M$ with
$\widehat\mu_Z\notin\mathcal U$ satisfies
\begin{equation}\label{eq:hard-wall}
	|V(Z)|^2
	\le\exp\{M^2(\log r-c_{\mathcal U}+o(1))\}.
\end{equation}
The bound is uniform over these configurations.  Since $K_\delta$
has fixed finite area, integrating this bound over $K_\delta^M$
adds only $O(M)$ to the logarithm.

\subsection{Confinement near the droplet}
\label{sec:proof-confinement}

We now bound the probability that any added eigenvalue lies outside
$K_\delta$.  This estimate holds uniformly over the frozen spectra.
After fixing $K_\delta$, we choose $t_\ast>1$ close enough to $1$ such that
$D_{t_\ast}\Subset\operatorname{int}K_\delta$ and
$t_\ast<\liminf_{|z|\to\infty}Q(z)/\log|z|$.
Since $M/N\to0$,
every polynomial in $F$ has degree less than $Nt_\ast$ for large $N$.
The weighted submean inequality on $\partial D_{t_\ast}$, followed by
the exterior maximum principle for $\widehat Q_{t_\ast}$ as in
\cite[Proposition~2.3]{HedenmalmWennman}, gives for
every $L^2(\lambda_N)$-normalized $p\in F$,
\begin{equation}\label{eq:exterior-bound}
	|p(z)|^2e^{-2NQ(z)}
	\le CN e^{-2N[Q(z)-\widehat Q_{t_\ast}(z)]},
	\qquad z\notin D_{t_\ast}.
\end{equation}
Let us introduce $W=Q-\widehat Q_{t_\ast}$. Then, the equality of the droplet and the
full coincidence set gives $W>0$ off $D_{t_\ast}$.  Since
$D_{t_\ast}\Subset\operatorname{int}K_\delta$ and
$W\ge\zeta\log|z|-O(1)$ at infinity for some $\zeta>0$,
continuity gives $W\ge b>0$ on $K_\delta^{\mathsf c}$.
Choose a fixed integer $N_0>2/\zeta$.  Then
$\int_{K_\delta^{\mathsf c}}e^{-N_0W}dA<\infty$, and for
$N\ge N_0$,
\[
	\int_{K_\delta^{\mathsf c}}e^{-2NW}dA
	\le e^{-Nb}\int_{K_\delta^{\mathsf c}}e^{-N_0W}dA
	\le C e^{-Nb}.
\]
Summing
\eqref{eq:exterior-bound} over an orthonormal basis of $F$ gives
\begin{equation}\label{eq:confinement}
	\Tr(\mathbf1_{K_\delta^{\mathsf c}}\Pi_F)
	\le C N(N+M)e^{-Nb}=o(1)
\end{equation}
for some $b>0$.  The left side is
$\int_{K_\delta^{\mathsf c}}R_{N+M,N}\,dA$, the expected number of
eigenvalues of the rank-$(N+M)$ ensemble outside the collar.  The operator
inequality $0\le\Pi_Y^\perp\le\Pi_F$ bounds the one-point
density of the added eigenvalues by $R_{N+M,N}$.  Their conditional
expected number outside $K_\delta$ therefore obeys the same bound for
every distinct $Y$.  Thus $\sup_Y\Prob(Z\not\subset K_\delta\mid Y)=o(1)$,
where the supremum is over frozen spectra for which the completion is
defined.

\subsection{Combining the estimates}
\label{sec:proof-combine}
Dividing numerator and denominator of \eqref{eq:partition-ratio} by
$h_{N,N}^{M}$ writes the conditional Boltzmann weight as
$|V(Z)|^2\prod_a\rho_Y(z_a)dA(z_a)$ and the conditional partition
function as $M!\det(G_Y/h_{N,N})$.  On $\mathcal G_N$, the numerator
restricted to $Z\subset K_\delta$, $\widehat\mu_Z\notin\mathcal U$ is
bounded by \eqref{eq:random-envelope} and \eqref{eq:hard-wall}.  The
denominator is controlled by \eqref{eq:gram-lower}, while the part of the
numerator with $Z\not\subset K_\delta$ is the confinement error
\eqref{eq:confinement}.  Uniformly on $\mathcal G_N$, we have the following estimate
\begin{equation}\label{eq:final-bound}
	\Prob(\widehat\mu_Z\notin\mathcal U\mid Y)
	\le\exp\{-c_{\mathcal U}M^2+O(M\log N)+o(M^2)\}+O(N^2e^{-bN}).
\end{equation}
Both terms tend to zero, because $M/\log N\to\infty$ controls the first, and
\eqref{eq:confinement} controls the second.  Since
$\Prob_Y(\mathcal G_N)\to1$, this proves \eqref{eq:weak-neighborhood-form},
and hence \eqref{eq:quenched-convergence}.  The one-point conclusions follow
by testing against bounded continuous functions and the triangle
inequality. \qed

\begin{remark}[Quantitative form]
	For fixed $B>1$ and all large $N$, the estimates hold on an event of
	$\Prob_N^{\mathrm{old}}$-probability at least $1-(N+1)N^{-B}$:
	\[
		\Prob(\widehat\mu_Z\notin\mathcal U\mid Y)
		\le e^{-c_{\mathcal U}M^2/2}+CN^2e^{-bN}.
	\]
	Taking $B=B_N\to\infty$ with $B_N=o(M/\log N)$ makes the exceptional
	set of frozen spectra superpolynomially small.
\end{remark}

\section{A three-fold droplet}\label{sec:examples}

We illustrate the two growth rules for a droplet with three-fold
symmetry.  Its explicit equilibrium flow and harmonic measure provide
a reference for comparing the spatial distribution of the added
eigenvalues and their angular correlations.
\subsection{Droplet and boundary coordinate}
The three-fold map
\begin{equation}\label{eq:threefold-map}
	f(w)=rw+\frac{a}{w^2},\qquad |w|>1,
\end{equation}
with $2a<r$ and $r^2-2a^2=1$, maps the unit circle to the boundary
of the unit-filling droplet $D=D_1$ for the potential
\begin{equation}\label{eq:threefold-potential}
	Q(z)=\tfrac12|z|^2-\tau\operatorname{Re}z^3,
	\qquad \tau=\frac{a}{3r^2}.
\end{equation}
Since the cubic potential is not confining, we restrict the eigenvalues
to $|z|\le R$ by setting $Q_R=Q$ inside the disk and $Q_R=+\infty$
outside.  We use $a=0.2$ and $R=2$.  Such cutoffs are standard for
polynomial normal-matrix potentials \cite{ElbauFelder}, but the sharp
wall falls outside the smooth-potential assumptions of
Theorem~\ref{thm:quenched}.

The exterior map $\varphi$ in \eqref{eq:conformal-map} satisfies
$\varphi^{-1}=f$.  Harmonic measure is uniform in the conformal angle
$\theta$, though generally nonuniform in arclength.  We assign an angle
to each sampled point by projecting it onto the closest boundary point:
\begin{equation}\label{eq:closest-boundary-projection}
	\varphi^{-1}(e^{i\Theta_D(z)})\in
	\operatorname*{arg\,min}_{\xi\in\partial D}|z-\xi|.
\end{equation}
If the closest point is not unique, we choose the one with the smallest
angle in $[0,2\pi)$.  Let $u_D(z)$ denote the signed distance to
$\partial D$, negative inside the droplet and positive outside.

Since $\Theta_D$ is continuous near the smooth boundary, it transfers
the spatial limit in Theorem~\ref{thm:quenched} to a uniform angular
limit, for admissible weights in the regime of the theorem.
Write $\widehat\nu_Z=(\Theta_D)_\#\widehat\mu_Z$ for the angular
distribution of the sampled batch and
$\nu_Y=(\Theta_D)_\#\E_{N,M}^{Y}[\widehat\mu_Z]$ for its conditional mean.
For a continuous $2\pi$-periodic function $g_{\mathrm{test}}$, let
\[
	c_{g_{\mathrm{test}}}=\int_0^{2\pi}g_{\mathrm{test}}(\theta)\,\frac{d\theta}{2\pi}.
\]
Then, for any fixed $\eta,\varepsilon>0$, both deviation probabilities
tend to zero:
\begin{equation}\label{eq:projected-harmonic-limit}
	\Prob_N^{\mathrm{old}}\!\left\{Y:
	\Prob\!\left(\left|\int g_{\mathrm{test}}\,d\widehat\nu_Z-c_{g_{\mathrm{test}}}\right|>\eta
	\,\middle|\,Y\right)>\varepsilon\right\},
	\quad
	\Prob_N^{\mathrm{old}}\!\left\{Y:
	\left|\int g_{\mathrm{test}}\,d\nu_Y-c_{g_{\mathrm{test}}}\right|>\eta\right\}
	\longrightarrow0.
\end{equation}
We use this uniform angular distribution as a reference for the
hard-wall numerical example below.

\subsection{One-point convergence}
We measure departures from harmonic measure through polynomial moments.
Any polynomial can be centered by subtracting its harmonic-measure mean.
We use the Faber polynomials $F_n$, whose normalization is fixed by the
exterior map through $F_n(f(w))=w^n+O(w^{-1})$.
On the boundary, this gives the angular mode $e^{in\theta}$ together
with negative Fourier modes, all with zero uniform average.
Thus $\int F_n\,d\omega_D=0$ for $n\ge1$.
Removing the constant term of a polynomial would not generally give
this property: for example, $F_3(z)=(z/r)^3-3a/r$ needs a nonzero
constant term to have zero harmonic-measure mean.
We use the first six moments to define
\begin{equation}\label{eq:faber-discrepancy}
	\mathcal D_6(Y;N,M)^2
	=\sum_{n=1}^{6}\frac{1}{n^2}
	\left|\int F_n\,d\,\E_{N,M}^{Y}[\widehat\mu_Z]\right|^2 .
\end{equation}
These moments are evaluated at the original points, before boundary
projection.  They are finite in the bounded hard-wall region.
For general admissible weights, convergence of these unbounded
polynomial observables requires additional moment estimates beyond
Theorem~\ref{thm:quenched}.

For the continuum equilibrium comparison, we average harmonic measure
along the growing droplets $D_t$, $1\le t\le1+\gamma$, where
$\gamma=M/N$, keeping the Faber polynomials of the old droplet fixed.
Three-fold symmetry leaves only the third and sixth moments.
Set $\alpha=a/r$ and $x(t)=3\tau r_t^4/r^3$, where the conformal radius
$r_t$ follows the regular branch of
$r_t^2-18\tau^2r_t^4=t$ with $r_1=r$.
An overbar denotes the average over $1\le t\le1+\gamma$.
Evaluating the two nonzero moments gives
\begin{equation}\label{eq:equilibrium-faber-discrepancy}
	\mathcal D_{6,\mathrm{eq}}(\gamma)^2
	=|\overline x-\alpha|^2
	+\frac{|15\,\overline{x^2}-18\alpha\overline x+3\alpha^2|^2}{36}.
\end{equation}
This expression gives the continuum equilibrium curves in
Figure~\ref{fig:threefold-harmonic}(c).  For $M/N$ small, the continuum equilibrium discrepancy has the expansion
\begin{equation}\label{eq:equilibrium-faber-small-band}
	\mathcal D_{6,\mathrm{eq}}(\gamma)
	=\frac{|\alpha|\sqrt{1+4\alpha^2}}
	{r^2(1-4\alpha^2)}\,\gamma+O(\gamma^2).
\end{equation}
For $a=0.2$, the coefficient is approximately $0.2241$, and
the continuum equilibrium discrepancy grows linearly with $M/N$.

\begin{lemma}[Conditional Faber fluctuations]\label{lem:faber-fluctuations}
	Let $\Pi_Y^\perp$ be the conditional projection, with range
	$q_Y\Poly_{<M}$ in $L^2(\lambda_N)$, and suppose the sampling region is
	contained in $\{|z|\le R\}$.  For every polynomial $p$ of degree
	$d<M$,
	\begin{equation}\label{eq:polynomial-variance-bound}
		\operatorname{Var}\!\left[
			\frac1M\sum_{z\in Z}p(z)\,\middle|\,Y
			\right]
		\le \frac{d}{M^2}\|p\|_{L^\infty(|z|\le R)}^2.
	\end{equation}
	Consequently, for $M>6$,
	\begin{equation}\label{eq:faber-fluctuation-bound}
		\E\!\left[
			\sum_{n=1}^6\frac1{n^2}
			\left|\int F_n\,d\left(
			\widehat\mu_Z-\E^Y_{N,M}[\widehat\mu_Z]
			\right)\right|^2\,\middle|\,Y
			\right]
		\le\frac1{M^2}\sum_{n=1}^6
		\frac{\|F_n\|_{L^\infty(|z|\le R)}^2}{n}.
	\end{equation}
\end{lemma}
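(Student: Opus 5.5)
The plan is to use the standard variance formula for linear statistics of a projection determinantal process and to exploit the polynomial structure of $\operatorname{Ran}\Pi_Y^\perp=q_Y\Poly_{<M}$ from Proposition~\ref{prop:polynomial}. Write $P=\Pi_Y^\perp$ and let $m_p$ denote multiplication by $p$ on $L^2(\lambda_N)$, where $\lambda_N$ is supported in $\{|z|\le R\}$ by the sampling assumption. For a rank-$M$ projection DPP and a bounded, possibly complex, function $p$, the conditional variance $\E\bigl[|\sum_{z\in Z}p(z)-\E[\sum_{z\in Z} p(z)\mid Y]|^2\mid Y\bigr]$ equals
\[
\Tr(P m_{|p|^2}P)-\Tr(P m_p P m_{\bar p} P)=\Tr\bigl(P m_{\bar p}(I-P)m_p P\bigr)=\|(I-P)m_pP\|_{\mathrm{HS}}^2 .
\]
I will take this identity as standard. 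It comes from the one- and two-point correlation functions $K(z,z)$ and $K(z,z)K(w,w)-|K(z,w)|^2$.

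The key step is a rank bound on $(I-P)m_pP$. Set $E_k=q_Y\Poly_{<k}$, so that $\operatorname{Ran}P=E_M$. If $p$ has degree $d<M$ and $g\in E_{M-d}$, then $pg\in E_M$, and hence $(I-P)m_pg=0$. The operator $(I-P)m_pP$ therefore vanishes on $E_{M-d}\oplus(\operatorname{Ran}P)^\perp$. Since $E_{M-d}$ has codimension $d$ in $E_M$, the operator has rank at most $d$. For an operator of rank at most $d$, the squared Hilbert--Schmidt norm is at most $d$ times the squared operator norm. We also have
\[
\|(I-P)m_pP\|_{\mathrm{op}}\le\|m_p\|_{\mathrm{op}}\le\|p\|_{L^\infty(|z|\le R)},
\]
and the last inequality uses that the measure lives in the disk. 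Dividing by $M^2$ then gives \eqref{eq:polynomial-variance-bound}.

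For \eqref{eq:faber-fluctuation-bound}, I apply the variance bound to each Faber polynomial $F_n$, which has degree $n\le 6<M$. Note that
\[
\int F_n\,d\bigl(\widehat\mu_Z-\E^Y_{N,M}[\widehat\mu_Z]\bigr)=\frac1M\sum_{z\in Z} F_n(z)-\E\Bigl[\frac1M\sum_{z\in Z} F_n(z)\Bigm|Y\Bigr].
\]
The expected squared modulus of this quantity is exactly the variance just bounded, namely at most $n\|F_n\|_\infty^2/M^2$. Multiplying by $1/n^2$ and summing over $1\le n\le 6$ yields the right side of \eqref{eq:faber-fluctuation-bound}.

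The only delicate point is the variance identity for complex test functions. I must make sure to use $\E|X-\E X|^2$ and not the variance of the real part alone. With that convention, the computation $\Tr(Pm_{|p|^2}P)-\Tr(Pm_pPm_{\bar p}P)=\Tr(Pm_{\bar p}(I-P)m_pP)$ is immediate, since $m_{|p|^2}=m_{\bar p}m_p$. Everything else is finite-dimensional linear algebra.
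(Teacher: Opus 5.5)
Your proof is correct and follows the paper's argument. The paper writes the variance as $M^{-2}\sum_j\|(1-\Pi_Y^\perp)(p\psi_j)\|^2$ in an orthonormal basis adapted to $q_Y\Poly_{<M-d}\subset q_Y\Poly_{<M}$, which is your $\|(I-P)m_pP\|_{\mathrm{HS}}^2$; its observation that the first $M-d$ terms vanish, with each remaining term at most $\|p\|_{L^\infty(|z|\le R)}^2$, is exactly your rank-$d$ and operator-norm bound.
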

\begin{proof}
	Choose an orthonormal basis $\psi_1,\ldots,\psi_M$ of
	$q_Y\Poly_{<M}$ so that the first $M-d$ elements span
	$q_Y\Poly_{<M-d}$.  Multiplying any of these first elements by
	$p$ gives a polynomial still in $q_Y\Poly_{<M}$, so
	\[
		(1-\Pi_Y^\perp)(p\psi_j)=0,\qquad j\le M-d.
	\]
	Only the last $d$ basis elements therefore contribute to the
	conditional variance.  The variance formula for a projection
	process gives
	\[
		\operatorname{Var}\!\left[
			\frac1M\sum_{z\in Z}p(z)\,\middle|\,Y
			\right]
		=\frac1{M^2}\sum_{j=M-d+1}^{M}
		\|(1-\Pi_Y^\perp)(p\psi_j)\|_{L^2(\lambda_N)}^2.
	\]
	An orthogonal projection cannot increase the norm, and
	$\|\psi_j\|_{L^2(\lambda_N)}=1$.  Each summand is thus at most
	$\|p\|_{L^\infty(|z|\le R)}^2$, proving
	\eqref{eq:polynomial-variance-bound}.
	Applying this bound to $F_n$, which has degree $n$, and summing
	with weights $1/n^2$ gives \eqref{eq:faber-fluctuation-bound}.
\end{proof}

The lemma controls the variation between batches sampled from the same
frozen spectrum.  We now examine the conditional mean numerically,
using the projected density and the Faber discrepancy.

In Figure~\ref{fig:threefold-harmonic}(b), the projected quenched density
is close to the harmonic profile in \eqref{eq:projected-harmonic-limit}.
Panel (c) shows a decreasing Faber discrepancy over $30\le M\le150$, while
the continuum equilibrium curves follow
\eqref{eq:equilibrium-faber-discrepancy}. Panel (d) measures the fraction of points farther than $0.1$ from the
old boundary.  In the sampled batches, all such points lie inside the
droplet.  Their mean count stays between $6.5$ and $7.9$, while their
mean fraction falls from about $0.2$ at $M=30$ to $0.05$
at $M=150$.  The falling fraction therefore mainly reflects the
increase in batch size. The figures use 32 frozen spectra at $N=5000$, 160 at $N=10^4$, and
30 conditional batches per spectrum for each $M\in\{30,60,100,150\}$.
Bootstrap intervals describe sampling uncertainty.  Grid refinement
changes the first 15 conditional angular means by at most
$8.4\times10^{-5}$ at $(N,M)=(10^4,60)$.

\begin{figure}[t]
	\centering
	\includegraphics[width=\textwidth]{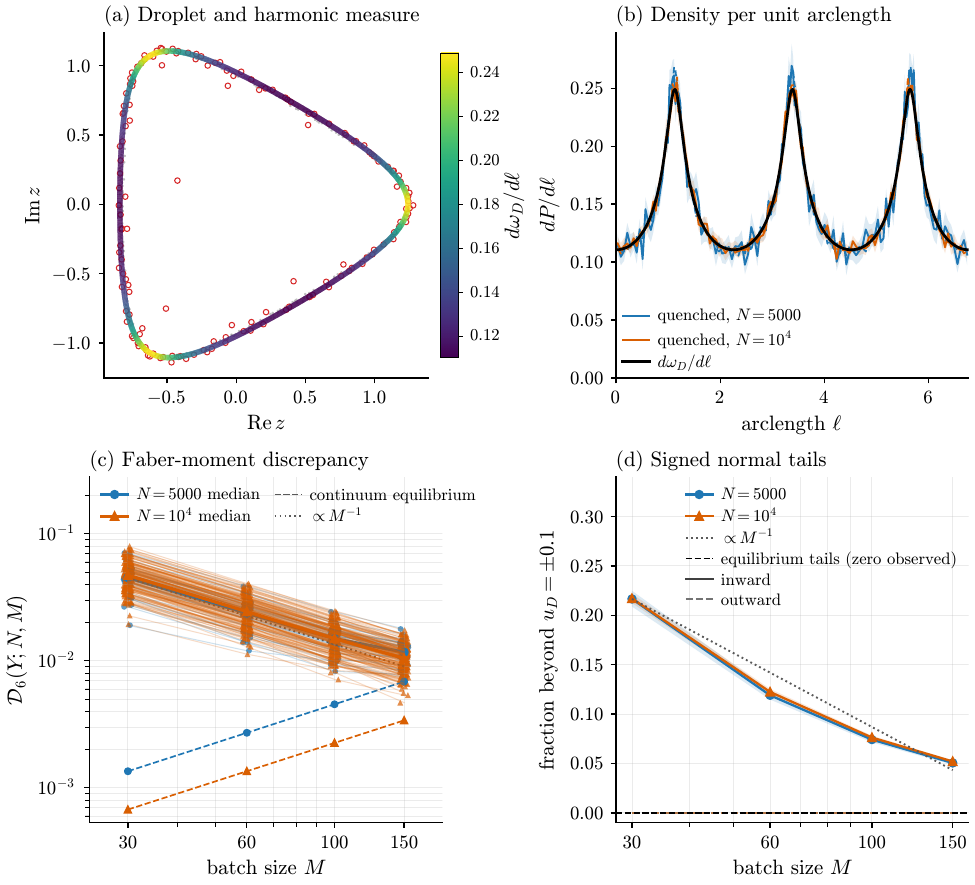}
	\caption{One-point observables on the three-fold droplet at $N=5000$
		and $N=10^4$.  (a) Boundary coloured by harmonic-measure density,
		with one $M=150$ quenched batch.  (b) Projected density per unit
		arclength at $M=150$, compared with harmonic measure.
		(c) Faber discrepancy \eqref{eq:faber-discrepancy}:
		thin lines show individual frozen spectra, thick solid lines
		show medians, dashed lines show continuum equilibrium values,
		and the dotted line is an $M^{-1}$ guide.
		(d) Fractions of points farther than $0.1$
		from the boundary.}
	\label{fig:threefold-harmonic}
\end{figure}

We now compare the conditional mean density with the equilibrium band
density at finite $N$.  Each is given by the diagonal of its projection
kernel.  Since
$\Pi_Y^\perp=\Pi_F-P_Y$ and
$\Pi^{\mathrm{band}}_{N,M}=\Pi_F-\Pi_H$,
the common contribution $\Pi_F$ cancels, giving
\begin{equation}\label{eq:vacancy-decomposition}
	\E_{N,M}^{Y}[M\widehat\mu_Z]-M\mu^{\mathrm{eq}}_{N,M}
	=\operatorname{diag}(\Pi_H-P_Y)e^{-2NQ}\,dA.
\end{equation}
Here $\operatorname{diag}T$ denotes the diagonal of the integral kernel
of $T$ relative to $d\lambda_N$.  Both densities count $M$ added eigenvalues,
so their difference has total mass zero.

To compare the Faber moments, we integrate this difference against
polynomials.  The following proposition expresses these integrals
through fluctuations of the frozen spectrum and a finite-rank remainder.
Dividing by $M$ gives the correction to the normalized mean moments
used above.  This identifies a factor $1/M$, although an $M^{-1}$ rate
still requires bounds on both the frozen-spectrum fluctuations and
the remainder.

\begin{proposition}[Finite-rank moment identity]\label{prop:screening}
	Let $Y$ be distinct, and let $\ell_1,\ldots,\ell_N\in\operatorname{Ran}P_Y$
	satisfy $\ell_i(y_j)=\delta_{ij}$.  Let $p$ be a polynomial of degree $d$
	and assume $\Poly_{<N+M+d}\subset L^2(\lambda_N)$, as holds for the
	bounded hard-wall sampling region.  Then
	\begin{equation}\label{eq:screening}
		M\!\int p\,d\bigl(\E^Y_{N,M}[\widehat\mu_Z]-\mu^{\mathrm{eq}}_{N,M}\bigr)
		=-\Bigl(\sum_{i=1}^Np(y_i)-\E_N^{\mathrm{old}}\sum_{i=1}^Np(y_i)\Bigr)
		+\mathcal R_p(Y),
	\end{equation}
	where
	\[
		\mathcal R_p(Y)=\sum_{i=1}^N\bigl[(1-\Pi_F)(p\,\ell_i)\bigr](y_i),
	\]
	and $(1-\Pi_F)(p\,\ell_i)$ lies in the span of
	$\phi_{N+M,N},\ldots,\phi_{N+M+d-1,N}$.
\end{proposition}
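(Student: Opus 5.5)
By \eqref{eq:vacancy-decomposition}, the left side of \eqref{eq:screening} equals $\int p\,\operatorname{diag}(\Pi_H-P_Y)\,d\lambda_N=\Tr(M_p\Pi_H)-\Tr(M_pP_Y)$, where $M_p$ is multiplication by $p$. Traces make sense because all operators have finite rank and $p\cdot\Poly_{<N+M}\subset\Poly_{<N+M+d}\subset L^2(\lambda_N)$. The first trace is $\int p\,R_{N,N}\,dA=\E_N^{\mathrm{old}}\sum_ip(y_i)$, which is already the centering term. The work is therefore to show
\[
	\Tr(M_pP_Y)=\sum_{i=1}^Np(y_i)-\mathcal R_p(Y).
\]

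I will compute $\Tr(M_pP_Y)$ with the non-orthogonal basis $\ell_1,\dots,\ell_N$ of $\operatorname{Ran}P_Y$. Such a basis exists: $\operatorname{Ran}P_Y$ is spanned by the evaluation representers $K_F(\cdot,y_i)$, and evaluation at the distinct points $y_i$ is injective on this span. It is injective because if $f=\sum c_jK_F(\cdot,y_j)$ vanishes at every $y_i$, then $\|f\|^2=\sum\bar c_if(y_i)=0$. Let $k_i=K_F(\cdot,y_i)$. Then $\langle f,k_i\rangle=f(y_i)$ for $f\in F$, so $\{k_i\}$ is the dual basis to $\{\ell_i\}$ in $\operatorname{Ran}P_Y$. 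Hence
\[
	P_Y=\sum_i\ell_i\otimes k_i,\qquad\text{that is,}\qquad P_Yg=\sum_i\langle g,k_i\rangle\ell_i .
\]
This holds for $g\in F$ by duality, and for general $g$ because $k_i\in F$, so the $F^\perp$ part contributes nothing. By cyclicity,
\[
	\Tr(M_pP_Y)=\sum_i\langle p\ell_i,k_i\rangle=\sum_i[\Pi_F(p\ell_i)](y_i).
\]
Writing $\Pi_F=1-(1-\Pi_F)$ and using $(p\ell_i)(y_i)=p(y_i)\ell_i(y_i)=p(y_i)$ gives the desired identity. Substituting back yields \eqref{eq:screening}, with the minus sign in front of the frozen-spectrum fluctuation.

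For the structural claim on the remainder, note $p\ell_i\in\Poly_{<N+M+d}$. Thus $(1-\Pi_F)(p\ell_i)=(\Pi_{N+M+d}-\Pi_F)(p\ell_i)$, which lies in the span of $\phi_{N+M,N},\dots,\phi_{N+M+d-1,N}$.

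The main obstacle is bookkeeping rather than analysis. I need to verify that $\operatorname{diag}T\,e^{-2NQ}dA$ integrated against $p$ equals $\Tr(M_pT)$ for finite-rank $T$ with kernels built from $L^2(\lambda_N)$ polynomials. I also need to confirm the sign and the conjugation convention in $\langle p\ell_i,k_i\rangle$, namely that it is linear in the first slot, so that it evaluates $\Pi_F(p\ell_i)$ at $y_i$. Both follow by writing $T=\sum a_{jk}\,\phi_j\otimes\phi_k$ and integrating term by term.
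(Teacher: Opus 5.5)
Your proposal is correct and follows essentially the same route as the paper. Both arguments write $P_Yg=\sum_i\langle g,K_F(\cdot,y_i)\rangle\ell_i$ using the interpolation basis, evaluate $\Tr(pP_Y)=\sum_i[\Pi_F(p\ell_i)](y_i)=\sum_ip(y_i)-\mathcal R_p(Y)$, subtract this from $\Tr(p\,\Pi_H)=\E_N^{\mathrm{old}}\sum_ip(y_i)$ via \eqref{eq:vacancy-decomposition}, and read off the remainder's structure from $p\ell_i\in\Poly_{<N+M+d}$; your dual-basis derivation of the formula for $P_Y$ is only a rephrasing of the paper's observation that $f-\sum_if(y_i)\ell_i$ vanishes on $Y$ and is therefore orthogonal to $\operatorname{Ran}P_Y$.
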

\begin{proof}
	A polynomial in $\operatorname{Ran}P_Y$ that vanishes on $Y$ must be zero.
	Evaluation is therefore invertible on this $N$-dimensional space, so the
	interpolation polynomials $\ell_i$ exist.
	For $f\in F$, the polynomial $f-\sum_if(y_i)\ell_i$
	lies in $F$ and vanishes on $Y$, hence is orthogonal to
	$\operatorname{Ran}P_Y$.  Thus $P_Yg=\sum_i\langle g,K_F(\cdot,y_i)\rangle\ell_i$
	for $g\in L^2(\lambda_N)$, and
	\[
		\Tr(pP_Y)=\sum_i\langle p\,\ell_i,K_F(\cdot,y_i)\rangle
		=\sum_i\bigl(\Pi_F(p\,\ell_i)\bigr)(y_i)
		=\sum_ip(y_i)-\mathcal R_p(Y).
	\]
	Since $\Tr(p\,\Pi_H)=\E_N^{\mathrm{old}}\sum_ip(y_i)$, integrating $p$
	against \eqref{eq:vacancy-decomposition} gives \eqref{eq:screening}.
	Finally, $p\,\ell_i\in\Poly_{<N+M+d}$.
\end{proof}
In \eqref{eq:screening}, the correction to the equilibrium mean contains
the negative fluctuation of $\sum_i p(y_i)$ about its expectation and a
remainder from the $d$ polynomial modes above $F$.  An $M^{-1}$ rate for
this discrepancy still requires a bound on $\mathcal R_p(Y)$.

\subsection{Angular correlations}
We compare quenched and equilibrium projected angles at $N=10^4$,
$M=60$, using finite-$M$ CUE and fixed-count independent angles as references.
The CUE curve is motivated by the semiclassical equilibrium law in conformal
coordinates~\cite{AlekseevTMF} and the fixed-$M$ limit of
Proposition~\ref{prop:radial-band-cue}.  Its accuracy for the hard-wall
ensemble is assessed numerically.

We measure the pair correlation $g(s)$ at separation
$s=M\Delta\theta/(2\pi)$, where $\Delta\theta\in[0,\pi]$ is the shorter
angular separation. Thus, $s$ is measured in mean-spacing units.
We count the points on an arc containing $L$ points on average.
The number variance $\Sigma^2(L)$ is the variance of this count,
averaged over all positions of the arc. For the structure factor, we set
\[
	c_n(Z)=\frac1M\sum_{j=1}^M e^{in\Theta_D(z_j)},
	\qquad m_n(Y)=\E[c_n(Z)\mid Y],
\]
and define $S(n)=M\E_{Y,Z}|c_n(Z)|^2$.  Its exact conditional decomposition is
\begin{equation}\label{eq:structure-factor-decomposition}
	S(n)=M\E_Y\operatorname{Var}(c_n\mid Y)
	+M\E_Y|m_n(Y)|^2.
\end{equation}
We compute the conditional mean and variance directly from the
conditional kernel on the numerical grid, then average over the
frozen spectra.
The pair correlation, number variance, and structure factor all depend
on the angular pair distribution averaged over the starting angle. At finite $N$, a nonuniform one-point density
also contributes to $S(n)$ through its squared Fourier coefficients.
For CUE, one has $S_{\mathrm{CUE}}(n)=\min(n,M)/M$
\cite{DiaconisShahshahani}. In Appendix~\ref{app:equilibrium-cue} we give a
separate radial equilibrium-band calculation of finite-$N$ corrections to
CUE.

\begin{figure}[t]
	\centering
	\includegraphics[width=\textwidth]{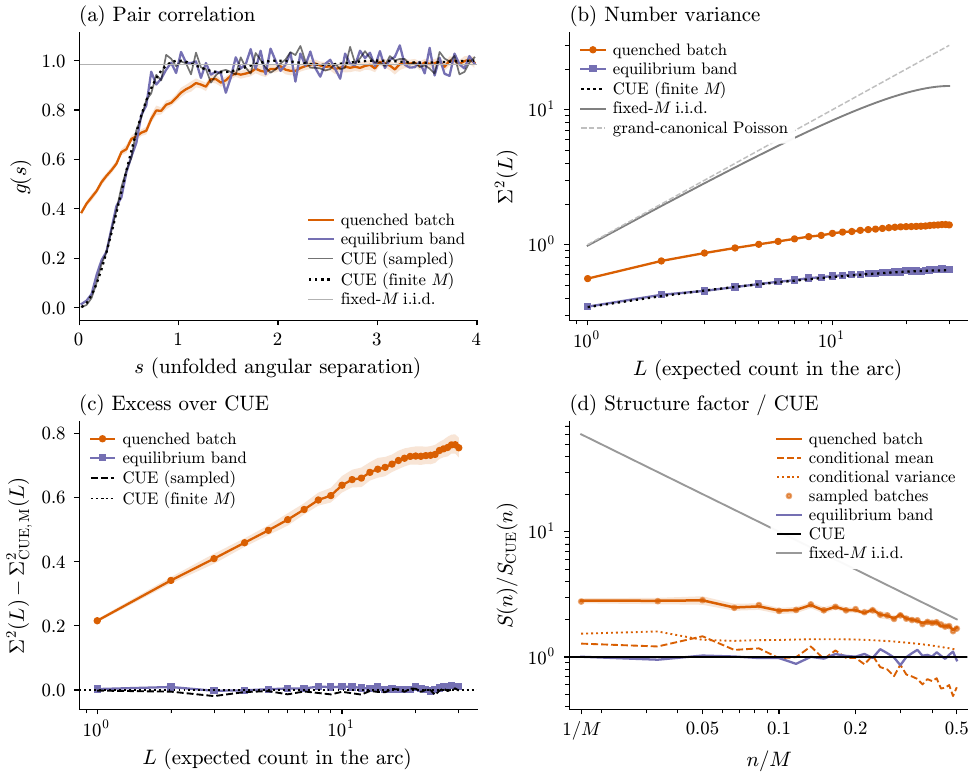}
	\caption{Projected-angle correlations at $N=10^4$, $M=60$.
		(a) Pair correlation.  (b) Absolute number variance.
		(c) Number-variance excess over finite-$M$ CUE.
		(d) Structure factor in CUE units, with the conditional
		mean--variance split \eqref{eq:structure-factor-decomposition}
		and a sampled-batch check.  Shading shows 10th--90th percentile
		bootstrap intervals. Systematic numerical errors are excluded.}
	\label{fig:fluctuations}
\end{figure}

Figure~\ref{fig:fluctuations} shows weaker angular repulsion and larger
fluctuations in the quenched batches than in the equilibrium band.
In panel (a), the pair correlation in the first separation bin,
$0\le s<0.05$, is about $0.39$ for quenched growth and $0.02$ for the band. Panel (b) compares the absolute number variances.  At $L=30$, the
quenched value is $1.40$, compared with $0.645$ for finite-$M$ CUE and
$15$ for fixed-count independent angles.  Panel (c) displays the excess
over CUE separately: the equilibrium band stays close to the CUE
reference, while the quenched variance is larger throughout the measured
range of arc sizes. Panel (d) shows the structure factor and its conditional decomposition.
Averaged over $1\le n\le30$, the quenched structure factor is $2.19$ CUE
units, with contributions $0.87$ from the conditional mean and $1.32$
from the conditional variance; the band gives $1.02$.
The conditional-mean term here concerns projected Fourier modes,
whereas \eqref{eq:screening} concerns polynomial moments.

Quenched batches are still much more rigid than independent angles.
Radial separation could weaken their projected repulsion, but the data
establish neither this mechanism nor a limiting contact value.
To establish hyperuniformity, one must show that the variance of the
number of points in a long arc grows more slowly than its mean count
as the number of points tends to infinity.

\section{Conclusion}
\label{sec:discussion}

The new eigenvalues approach the same boundary distribution whether
the old spectrum is frozen or the growth is described by the equilibrium
band.  For a regular analytic droplet and
\[
	\log N\ll M\ll N,
\]
their spatial distribution converges weakly to exterior harmonic measure
$\omega_D$ on the old boundary.  Theorem~\ref{thm:quenched} gives this
limit conditionally on the frozen spectrum: the probability of exceptional
old configurations tends to zero.  Proposition~\ref{prop:band-harmonic-measure}
gives the corresponding equilibrium-band limit, consistent with the
continuum growth law \eqref{eq:rule-a-drift}.

At finite $N$, the two laws differ.  The equilibrium density increment
comes from the polynomial modes of degrees $N,\ldots,N+M-1$, which form
the band process \eqref{eq:equilibrium-band-process}.  Quenched growth
instead samples from the conditional law \eqref{eq:completion} with the
old eigenvalues held fixed.  Its density contains the factor $|q_Y|^2$, so
the added eigenvalues interact with the actual frozen configuration
(Proposition~\ref{prop:polynomial}).

The hard-wall computation for the three-fold droplet in
Section~\ref{sec:examples} illustrates this comparison.  At $N=5000$ and
$10^4$, the projected quenched density is close to harmonic measure,
and its Faber discrepancy decreases as $M$ grows.  The continuum
equilibrium discrepancy is computed explicitly in
\eqref{eq:equilibrium-faber-discrepancy}.  For batches sampled from the
same frozen spectrum, Lemma~\ref{lem:faber-fluctuations} bounds the
root-mean-square fluctuations of these moments about their conditional
means by a constant times $M^{-1}$.  Proposition~\ref{prop:screening}
expresses the difference between the quenched and equilibrium mean
moments through fluctuations of the old spectrum and a finite-rank
remainder.  Bounds on these terms could turn the identity into
quantitative asymptotics for the conditional mean moments, including
a convergence rate for their difference from the equilibrium values.

The angular correlations also distinguish the two laws.  At
$(N,M)=(10^4,60)$, the equilibrium-band results are close to finite-$M$
CUE.  Quenched batches show weaker angular repulsion and larger number
variance and structure factor, although their counts fluctuate much less
than those of independent angles.  The decomposition
\eqref{eq:structure-factor-decomposition} separates two contributions
to the excess over CUE: the angular Fourier components of the conditional
mean and fluctuations of batches sampled at fixed spectrum.
Both contribute in the computation.

The limiting law of the angular fluctuations remains open.  It would
require control of both the conditional mean and the fluctuations of
batches sampled at fixed old spectrum.

Extending the one-batch theorem to an evolving configuration requires
estimates that remain uniform over successive growth steps.
A connection with particle attachment would further require
quantitative control of the distance of the new eigenvalues from the
boundary.  These estimates would clarify how the correlated batches
studied here relate to stochastic Laplacian growth and
diffusion-limited aggregation.


\section*{Acknowledgments}


The study was  supported
by ``Native Towns'', a social investment program of PJSC ``Gazprom Neft''.

\section*{Statements and Declarations}

\textbf{Competing interests.}
The author declares no competing interests.

\textbf{Data availability.}
The numerical data generated in this study and the scripts used to produce
the figures are available from the author upon reasonable request.

\appendix
\clearpage
\section{Equilibrium-band angular limits and a radial finite-size correction}
\label{app:equilibrium-cue}

\subsection{Fixed-batch CUE limit}
\begin{proof}[Proof of Proposition~\ref{prop:radial-band-cue}]
	Let $u_{k,N}=\phi_{N+k,N}e^{-NQ}$, $0\le k<M$, which are
	orthonormal in $L^2(dA)$.  Choose a fixed collar $U$ of $\partial D$
	on which the exterior maps $\varphi_t$ and the holomorphic boundary
	functions $\mathcal Q_t$ extend for $t$ near $1$.  Locally, with a branch
	of the logarithm, we put
	\[
		\mathcal F_t=t\log\varphi_t+\mathcal Q_t.
	\]
	On the exterior, $\operatorname{Re}\mathcal F_t=\widehat Q_t$ by the obstacle identity
	in Section~\ref{sec:proof}. Inside the collar it denotes the harmonic
	continuation of that exterior solution.  With $U_t$ and $C_t$ as in
	Section~\ref{sec:proof}, the weighted growth law
	\eqref{eq:rule-a-drift} gives, for exterior $z$,
	\[
		\partial_t U_t(z)=\int\log|z-w|\,d\omega_{D_t}(w)
		=\log r_t+\log|\varphi_t(z)|.
	\]
	The contact identity gives $C_t'=-\log r_t$, as in the derivation
	of \eqref{eq:norm-derivative}.  Hence,
	$\partial_t\operatorname{Re}\mathcal F_t=\log|\varphi_t|$.  Holomorphicity and the
	normalization at infinity then give, locally on $U$,
	\[
		\partial_t \mathcal F_t=\log\varphi_t.
	\]

	The uniform planar orthogonal-polynomial expansion
	\cite[Theorems~1.3 and 3.2]{HedenmalmWennman} has leading term
	\[
		u_{k,N}(z)=N^{1/4}\sqrt{\varphi_{t_k}'(z)}\,
		\mathcal B_{0,t_k}(z)\exp\{N(\mathcal F_{t_k}(z)-Q(z))\}
		\bigl(1+O(N^{-1})\bigr),
		\qquad t_k=1+k/N,
	\]
	uniformly in a shrinking boundary collar, for each fixed $k$.
	The leading factor $\mathcal B_{0,t}$ is nonzero and varies smoothly with $t$
	\cite[Theorem~1.5]{HedenmalmWennman}.  The identity
	$\partial_t \mathcal F_t=\log\varphi_t$ and smoothness of the flow give
	$N(\mathcal F_{t_k}-\mathcal F_1)=k\log\varphi_1+O(N^{-1})$ uniformly there.
	The $L^2$ expansion and the normal Gaussian localization give, with
	$\varphi=\varphi_1$ and a smooth cutoff $\chi$ supported in $U$ and
	equal to one near $\partial D$,
	\[
		\|u_{k,N}-\chi\varphi^k u_{0,N}\|_{L^2(dA)}\longrightarrow0.
	\]
	The orbitals have asymptotically all their mass where
	$|\varphi|=1+o(1)$.  Since $M$ is fixed, replacing $\varphi^k$ by
	$e^{ik\Theta}$ gives
	\[
		\|u_{k,N}-v_{k,N}\|_{L^2(dA)}\longrightarrow0,
		\qquad v_{k,N}=\chi u_{0,N}e^{ik\Theta},\quad 0\le k<M.
	\]

	Let $m_N(\theta)d\theta$ be the conformal-angle marginal of
	$|\chi u_{0,N}|^2dA$.  The leading coefficient in
	\cite[Theorem~1.5]{HedenmalmWennman} has squared modulus
	$\pi^{-1/2}\sqrt{\Delta Q}$ on $\partial D$.  At a boundary point
	$p$, with signed normal coordinate $x_\perp$,
	$Q-\operatorname{Re}\mathcal F_1=2\Delta Q(p)x_\perp^2+O(x_\perp^3)$.
	Thus, uniformly in $p$ for bounded $\sqrt N x_\perp$,
	\[
		|u_{0,N}(p+x_\perp\mathbf n_p)|^2dA
		=(1+o(1))\frac{\sqrt N}{\pi^{3/2}}
		|\varphi'(p)|\sqrt{\Delta Q(p)}
		e^{-4N\Delta Q(p)x_\perp^2}\,dx_\perp\,d\ell(p).
	\]
	The $L^2$ expansion controls the integrated tails outside this normal
	scale.  Integration in $x_\perp$ cancels the factor $\sqrt{\Delta Q(p)}$.
	Using $|\varphi'(p)|\,d\ell(p)=d\theta$ we obtain
	\[
		\left\|m_N-\frac1{2\pi}\right\|_{L^1([0,2\pi])}
		\longrightarrow0.
	\]

	The band density relative to $dA^{\otimes M}$ is
	$M!^{-1}|\det[u_{k,N}(z_a)]_{k=0}^{M-1}{}_{a=1}^{M}|^2$.
	By multilinearity and the $L^2$ comparison of adjacent orbitals,
	replacing $u_{k,N}$ by $v_{k,N}$ in the determinant changes the
	normalized wave function by $o(1)$ in $L^2(dA^{\otimes M})$.
	The corresponding densities therefore differ by $o(1)$ in $L^1$.
	After integrating normal coordinates, the latter angular density is
	\[
		\frac1{M!}
		\left|\det[e^{ik\theta_a}]_{k=0}^{M-1}{}_{a=1}^{M}\right|^2
		\prod_{a=1}^M m_N(\theta_a).
	\]
	The angular marginal limit and the Vandermonde determinant formula
	show $L^1$, hence total-variation, convergence to
	\eqref{eq:radial-band-cue}.\qedhere

\end{proof}

\subsection{Finite-size correction for a radial weight}
For a confining radial weight, we can compute the band structure factor
in the polar angle $\arg z$.  Set
\[
	h_{x,N}=\int_{\C}|z|^{2x}e^{-2NQ(z)}\,dA(z).
\]
Compression of $e^{in\arg z}$ to the orbital band gives
\begin{equation}\label{eq:radial-band-structure-factor}
	S_{\mathrm{band}}(n)
	=1-\frac1M\sum_{k=N}^{N+M-n-1}
	\frac{h_{k+n/2,N}^2}{h_{k,N}h_{k+n,N}},
	\qquad 1\le n<M.
\end{equation}
For the Gaussian weight, $h_{x,N}=\Gamma(x+1)/N^{x+1}$, and Stirling
expansion gives, uniformly for $1\le n\le M/2$ when $M^2/N\to0$,
\begin{equation}\label{eq:radial-band-cue-correction}
	\frac{S_{\mathrm{band}}(n)}{n/M}
	=1+\frac{n(M-n)}{4N}+O\!\left(\frac{M^4}{N^2}\right).
\end{equation}
If also $M\to\infty$, averaging over $1\le n\le\lfloor M/2\rfloor$
gives the relative correction $M^2/(24N)+o(M^2/N)$.  For fixed $M\ge2$,
the leading averaged correction is instead
$N^{-1}\lfloor M/2\rfloor^{-1}\sum_{n=1}^{\lfloor M/2\rfloor}n(M-n)/4$.
This coefficient applies to the radial weight and polar angle.  The
correction for the closest-boundary conformal angle of the three-fold
droplet remains to be determined.

\providecommand{\bysame}{\leavevmode\hbox to3em{\hrulefill}\thinspace}
\providecommand{\MR}{\relax\ifhmode\unskip\space\fi MR }
\providecommand{\MRhref}[2]{%
  \href{http://www.ams.org/mathscinet-getitem?mr=#1}{#2}
}
\providecommand{\href}[2]{#2}

\end{document}